\documentclass[final,5p,times,twocolumn,dvipsnames]{elsarticle}

\usepackage{amsmath,amssymb}  
\usepackage{graphicx}         
\usepackage{hyperref}         
\usepackage{xcolor,here}
\usepackage{lineno}           
\usepackage{multicol, algorithm}         
\usepackage{algpseudocode}
\usepackage{listings}
\usepackage{tikz}
\usetikzlibrary{shapes,arrows,calc}

\newtheorem{prop}{\sc Proposition}
\newtheorem{rem}{\sc Remark}
\newtheorem{lem}{\sc Lemma}

\newcommand{\e}{\vskip 0.5mm \ \\}

\journal{Automatica}

\begin{document}

\begin{frontmatter}

\title{A Nonlinear Model Predictive Control  Perspective on Gradient-Based Optimization:\\  A New Efficient, Parameter-Free and Provably Stable Algorithm}

\author[gipsa]{Mazen Alamir\corref{cor1}}

\cortext[cor1]{Corresponding author}
\ead{mazen.alamir@grenoble-inp.fr}
\address[gipsa]{Univ. Grenoble Alpes, CNRS, Grenoble INP, GIPSA-lab, 38000 Grenoble, France}

\begin{abstract}
    This paper discusses some aspects related to gradient-based optimization algorithms with special focus on the requirements associated to their use in the implementation of Nonlinear Model Predictive Control. Based on a dedicated discussion, a new algorithm, termed Search and Accelerate (\textbf{SaA}) is proposed that mixes together a novel line search, a trust region mechanism together with an adaptation of the gradient acceleration scheme. A dedicated benchmark involving a set of 600 instances of box constrained optimization problems is designed and used in order to show the algorithm performances which make it a highly competitive general purpose gradient-based alternative for box-constrained optimization problems. An appealing feature of the algorithm is its robustness to the choice of the few parameters involved in its definition making the default values a valid option for any problem without a priori knowledge of the related Lipchitz constant. Moreover, an example of use of the proposed algorithm in NMPC implementation is proposed showing the possibility to reduce the control updating period which might be mandatory in some circumstances. 
\end{abstract}

\begin{keyword}
    Gradient-Based Optimization \sep Line Search \sep Box-Constraints \sep Gradient Acceleration \sep  Trust-Region\sep NMPC.
\end{keyword}

\end{frontmatter}

\section{Introduction}
\noindent There are hundreds, if not thousands, of variants of gradient-based algorithms for box-constrained optimization. Performing a detailed comparison among them would go beyond any single decently short contribution. This introduction aims at recalling the major alternatives, highlighting the choices they incorporate regarding the different paradigms involved in a descent method. Moreover, the impact of these choices on the use of the algorithm in repetitive solving of dynamically varying, difficult to characterize Nonlinear Model Predictive Control (\textbf{NMPC}) problems is analyzed. By so doing, the objective of this introduction is to provide a clear positioning of the proposed algorithm that underlines its novelties and convenience as a general purpose gradient-based solver as well as an eligible and viable option for use in NMPC implementation. To this end, the competing alternatives that are used in the comparison are smoothly introduced and discussed.
\e Before, let us recall why and when to use first order gradient-based algorithms in NMPC implementation in the first place. 
\subsection{Why / When to use first order methods in NMPC?}
\noindent Nonlinear Model Predictive Control (\textbf{NMPC}) is the most advanced and generic  control design methodology for constrained nonlinear systems \cite{RawlingsMayne2009}. Its popularity is due to its ability to handle constraints, nonlinearities and control problem's formulations that go beyond regulation and tracking. The counterpart to these advantages is the need to repeatedly solve complex optimization problems on-line. More precisely, at each updating period, a new optimization problem is formulated based on the current values of the estimated value of the state, the set-point and possibly the current estimate of the system's parameters and the problem is solved within the available time, namely the control updating period. This amount of computation time depends on the level of uncertainties/disturbance that needs to be compensated for by the virtue of the control feedback.
\e This task can \textit{theoretically} be achieved using state-of-the-art, generally second order methods which are made accessible through user-friendly frameworks \cite{Andersson2019CasADi, Gillis2020, Verschueren2021acados}. Nevertheless, many contributions \cite{PeyrlHelfried2014Piot, RichterS.2012CCCf, AlamirMazen2017ASUP, ALAMIR201565, patrinos2014, NECOARA201549} underlined that, when systems with fast dynamics and short control updating period are considered, the \textit{elementary} computation cost of a single iteration of the optimization algorithm being used might become a critical parameter as it determines the upper bound for the feedback bandwidth \footnote{The process of executing only a small number of iterations and to apply the resulting current solution that might be far from the optimal one is sometimes termed as \textit{distributing the iterations over the system's life-time}  or more shortly \textit{real-time iterations}}. To say it differently: \textit{For sufficiently fast dynamical systems, the available computation time may be so limited that performing only few iterations of a cheap gradient-based optimization algorithm becomes the only eligible strategy for implementing} NMPC. 
\e Historically, the awareness of the previously stated fact came together with an increasing popularity of the Nesterov acceleration scheme (also called the Fast Gradient Method \cite{nesterovfg2004}) and hence many attempts emerged in which NMPC implementation were based on this approach that is briefly recalled hereafter. 
\subsection{The Fast Gradient Method (FGM)}
\noindent Among all first order methods, the Fast Gradient Method \cite{nesterovfg2004} \text{(\textbf{FGM})} has been the most widely used within the NMPC community. Indeed, it can be shown \cite{Alamir2026MPC} that under some circumstances and provided that the parameters of the \textbf{FGM} are appropriately tuned, it is possible to decrease the computation time making the NMPC real-time compatible without noticeably deteriorating the closed-loop behavior. 
\e Notice that in the \textbf{FGM} method, the iterate $y$ that should converge to a solution of the optimization problem is updated according to:
\begin{subequations}\label{FGeq}
\begin{align}
v_{k+1}&= y_k - \alpha g(y_k)\hfill &\text{\color{Gray}(Gradient step)}\label{FG1}\\
y_{k+1}&=v_{k+1} + \beta_k(v_{k+1} - x_k)  \hfill &\text{\color{Gray} (Acceleration step)}\label{FG2}
\end{align}    
\end{subequations}
where the stability constraint on the step size $\alpha$ involved in \eqref{FG1} is expressed by $\alpha\le 1/L$, $L$ being the Lipchitz constant of the cost function while $g$ represent the gradient. The definition of the dynamic gain $\beta_k$ lies in the heart of the \textbf{FGM}. 
\e 
Notice however that, in the context of NMPC, it is very difficult to \textit{appropriately tune} the parameters of \textbf{FGM} so that a decrease of the cost function is achieved over all possible instances of the NMPC-related problem. This is because the Lipchitz constant $L$ invoked above depends on the current state, the amount of change in the set-point asked by the operator and the stage cost-related penalties being used that might be changed by the designer based on the examination of the closed-loop behavior. Consequently, it seems obvious that enforcing an unconditional stability of \textbf{FGM} \cite{nesterovfg2004} through the condition $\alpha\le 1/L$ without adopting too small values of $\alpha$ might be risky. 
\e Now taking $\alpha$ excessively small so that the theoretical condition is satisfied over all possible real-life instances of the optimization problem would very likely lead to slow convergence that would make the closed-loop unsuccessful.  On the other hand, taking high values of the step size $\alpha$ might accelerate the convergence for some instances of the problem at the price of risking divergence in other instances which is obviously unacceptable for critical engineered systems. Such behaviors materialize in the numerical study provided in the paper as far as the implementation of \textbf{FG} is concerned (see Section \ref{sec-bench}).
\e Another reason for which the \textbf{FGM} is not convenient in the  NMPC context is the iteration-dependent scheduled gain $\beta_k$ used in \eqref{FG2} which suggests that in order to cope with the theory, one needs to re-initialize  $v_0,y_0,\beta_0$ each time the horizon is shifted which might be incompatible with the use of warm start.
\e The previous discussion suggests that \textbf{FGM}, which represented a key advance compared to the trivial pure gradient method, is probably \textit{too simple} to be used in realistic and viable industrial NMPC implementation. It is rather limited to static problems for which iterative tuning can be done until success is achieved. 
\subsection{Strong Armijo-Wolf method}
\noindent The sensitivity of the fast gradient is mainly due to the absence of \textit{line-search} to determine the \textit{current} appropriate gradient step. Such a line search is implemented in the famous Strong Armijo-Wolf (\textbf{SAW}) algorithm \cite{MoreThuente1994, HagerZhang2005} and its many variants. This algorithm uses backtracking (successive reduction steps on $\alpha$) in order to enforce both a decrease of the cost function (Armijo rule) and a decrease in the norm of the directional gradient (curvature condition). This is obtained at the price of evaluating the gradient at different points through the back-tracking iterative process within a single iteration. This might increase the elementary cost of a single iteration discussed above. 
\e As a consequence of this adaptation mechanism, the \textbf{SAW} algorithm does not incorporate any acceleration step as the theoretical stability under the acceleration step is based on the use of a constant step size $\alpha\le 1/L$. In other words, the benefit from optimizing the step $\alpha$ by line-search might be negatively over-compensated by the absence of the acceleration step.
\subsection{The FISTA method}
\noindent In order to be able to use stable acceleration step in the absence of a priori knowledge of $L$ while avoiding taking a \textit{universally small value of $\alpha$}, the Fast Iterative Shrinkage-Thresholding Algorithm (\textbf{FISTA}) \cite{BeckTeboulle2009, KimFessler2018} does not attempt a line-search for $\alpha$. Rather it looks for an estimation $\hat L$ of the Lipchitz constant $L$ via a backtracking mechanism\footnote{Successive reduction of the value of $\alpha$ initially taken sufficiently high.} and this enables to use the original acceleration scheme with a scheduled momentum gain. Notice however that by dropping the line search and setting the step size to $\alpha=1/\hat L$, the choice might still be conservative. More precisely, the estimation process needs to start from some initial guess  $L_0$ and the choice of this key quantity is critical as the estimation of $L$ adopts one-side adaptation ($L$ can only increase). Consequently, if very high $L_0$ is used, slow convergence is induced while the use of too small a value leads to slow adaptation process.
\e This possible lack of performance, while already debatable in general, may become a strong disadvantage in the context of fast NMPC for the reasons discussed earlier. 
\subsection{The Adam's method}
\noindent It is important to underline that the comparative study does not include the nowadays very famous \textbf{Adam} stochastic optimization algorithm \cite{KingmaBa2015} that is widely used to train Deep Neural Networks. The reason is that this algorithm is obviously oriented towards off-line long training processes where the step is non uniformly adapted over the different components of the vector of decision variables and this based on a high number of previous computation instances of the gradient over the process. This is obviously incompatible with the real-time computation of continuously varying optimization problems and the need to provide a sufficiently good solution in few iterations in the absence of previous high number of evaluations of the gradient. 
\e Based on the above discussion, the contribution of the present paper lies in the following items: 
\begin{itemize}
\item A new algorithm is proposed that simultaneously conciliates line search, acceleration and stability altogether. The relevance of this contribution is not necessarily linked to the use of the algorithm in NMPC context and can be viewed as a general contribution to the literature on the gradient based iterations for its own sake.
\item A benchmark of 600 optimization problems is built and made available \cite{mazen_alamir_2026_kaggle_bench_opt} in order to evaluate the performance of the proposed algorithm and to draw comparison with the previously recalled alternatives to solve box-constraints optimization problem regardless of its use in the NMPC context.
\item Finally, numerical experiments regarding the use of the algorithm in NMPC implementation are conducted in order to show the relevance of its use when very small updating periods are necessary that go below the minimal elementary computation times corresponding to higher order methods. While this fact is already reported for the \textbf{FGM} (when successful), these experiments show that this feature still hold for the proposed algorithm that incorporates slightly more involved set of computation than the \textbf{FGM}.
\end{itemize}
 The paper is organized as follows. First Section \ref{sec-defnotation} introduces some definitions and notation used throughout the paper. Section \ref{sec-algo} presents the proposed algorithm and briefly analyzes its properties. Section \ref{sec-bench} describes the benchmark used in the numerical investigation and comparison that are presented in Section \ref{sec-results}. This section also shows the NMPC-related investigation. The paper ends with a conclusion summarizing its findings and describing some further investigation to come.
\section{Definitions and notation}\label{sec-defnotation}
\subsection{The optimization problem}
\noindent We are interested in the following box-constrained optimization problem:
\begin{equation}
\min_{x\in [\underline x, \bar x]}f(x)\label{optproblem}
\end{equation}
where $x\in \mathbb R^n$ is the decision variable while $f$ is a differentiable function with gradient denoted by $g(x):=\nabla f(x)\in \mathbb R^n$. 
\e Notice that the fact that \eqref{optproblem} allows only for box constraints to be considered means that if constraints of the form: 
\begin{equation}
c_j(x)\le 0 \quad j\in \{1,\dots,n_c\}\label{defdesconstraintes}
\end{equation}
need to be handled, they should be added to some original cost function, say $f_0$, through strongly weighted exact penalties:
\begin{equation}
f(x) := f_0(x)+\rho_c\sum_{j=1}^{n_c}\Bigl[\max(0,c_j(x))\Bigr]^2 \label{defderhoc}
\end{equation}
leading to the function $f$ invoked in the problem statement \eqref{optproblem}. 
\e Notice that for real-life applications, requiring the full rigorous satisfaction of some of the state constraints might lead to infeasibility and hence unpredictable behavior of the underlying solver. Therefore, the use of soft constraints of the form \eqref{defderhoc} might be recommended even if one disposes of solvers that are able to handle non box-constrained problems. 
\subsection{The projection map}
\noindent The projection map on the admissible domain $[\underline x, \bar x]$ is denoted by $\Pi$, namely:
\begin{equation}
\Pi(x)= \begin{bmatrix} 
\pi_1\cr \vdots \cr \pi_n
\end{bmatrix}  \quad \text{s.t}\quad \pi_i=\min\Bigl(\bar x_i,\max\bigl(\underline x_i,x_i\bigr)\Bigr)\label{defdePi}
\end{equation}
\subsection{Sets of investigated steps}
\noindent When gradient-based solutions are investigated, the determination of the step size along the gradient descent direction and, in the case of the proposed algorithm, along the accelerating direction as in \eqref{FG2}, is a key issue. The following notation is used to parameterize the set of step sizes to be investigated as shown later on. As two different line search processes will be used, there are two different parameterizations that are discussed hereafter.
\e Given a triplet $(\underline{z},\bar z, n_g)\in \mathbb R^2\times \mathbb N$ such that $\underline z< \bar z$, the following two discrete sets are defined:
\begin{subequations}\label{defdesA}
\begin{align}
\mathbb A_\text{lin}(\underline z, \bar z,n_g)&:=\Bigl\{\underline z + \frac{i}{n_g-1}(\bar z-\underline z)\Bigr\}_{i=0}^{n_g-1} \label{Alin} \\
\mathbb A_\text{log}(\underline z, \bar z,n_g)&:=\Bigl\{10^z\Bigr\}_{z\in \mathbb A_\text{lin}(\underline z, \bar z,n_g)}  \label{Alog}
\end{align}
\end{subequations}
For instance\footnote{The presence of negative steps in $\mathbb A_\text{lin}$ is a posteriori justified by the results of the investigation contained in Section \ref{negative_c}.}:
\begin{subequations}
\begin{align}
\mathbb A_\text{lin}(-0.2, 1, 7)&:=\Bigl\{-0.2, 0, 0.2, \dots, 1.0\Bigr\}\label{AlinEx} \\
\mathbb A_\text{log}(-8, 0, 9)&:=\Bigl\{10^{-8},10^{-7},\dots,0, 1\Bigr\}  \label{AlogEx}
\end{align}
\end{subequations}
Anticipating the description of the algorithm that follows, these two families of sets will be used to perform respectively line search on the anti-gradient direction ($-g(y_k)$) and the momentum direction ($v_{k+1}-v_k$) invoked in \eqref{FG2}. The former involves the set $\mathbb A_\text{log}$ and the former involves the set $\mathbb A_\text{lin}$.
\e While, for a given $n_g$, a constant $\mathbb A_\text{lin}$ is used in all the numerical investigation to determine the step on the acceleration direction, the following transformations might be applied to the set $\mathbb A_\text{log}$ during the iterations which is used to determine the step along the gradient direction. These transformation introduce a kind of trust-region adaptation mechanism that is described in the following section 
\subsection{Trust-region adaptation to $\mathbb A_\text{log}$}
\noindent Depending on conditions that are expressed later on, one of the following transformations of the set of admissible steps $\mathbb A_\text{log}$ might be applied:
\begin{itemize}
\item The two-side contraction defined by:
\begin{equation}
\mathcal T_c\Bigl(A_\text{log}(\underline z, \bar z,n_g)\Bigr)=A_\text{log}(\underline z-1\, \bar z-1,n_g) \label{defdeLTc}
\end{equation}
which amounts at dividing all the elements by $10$,  reducing hence the size of candidate steps along the gradient\footnote{Notice however that a lower bound $z_\text{min}=10^{-16}$ is used in order to avoid $\underline z$ to totally vanish and froze future extension steps. }.
\item The one-side contraction defined by:
\begin{equation}
\mathcal T_r\Bigl(A_\text{log}(\underline z, \bar z,n_g)\Bigr)=A_\text{log}(\underline z, \bar z-\delta_z,n_g) \label{defdeLTr}
\end{equation}
where 
\begin{equation}
\delta_z=\rho\cdot (\bar z-\underline z) \quad;\quad \rho\in (0,1)\label{defdedeltaz}
\end{equation}
which amounts at reducing the larger step inducing a non uniform reduction among the set of eligible steps that keeps the smallest value unchanged. A typical value of $\rho$ is $\rho=0.05$ which is used in all the future investigations.
\item The expansion defined by:
\begin{equation}
\mathcal T_E\Bigl(A_\text{log}(\underline z, \bar z,n_g))\Bigr)=A_\text{log}(\underline z+\gamma\delta_z, \bar z+\delta_z,n_g) \label{defdeLTE}
\end{equation}
where $\delta_z$ is defined by the same expression \eqref{defdedeltaz} mentioned above. This operation corresponds to a \textit{non uniform} expansion of the gradient step candidate values. A typical value of $\gamma\in (0,1)$ is $\gamma=0.1$ which is used in all the future investigations.
\end{itemize}
\subsection{Updating rules}
\noindent Given a current iterate $\xi\in \mathbb R^n$, the two previous sets enable to define two sets of candidate updates along a direction $w\in \mathbb R^n$, namely:
\begin{subequations}
\begin{align}
\mathbb C_\text{log}(\xi\vert w,\eta)&:=\Bigl\{\Pi(\xi + \alpha w)\quad\vert\quad \alpha\in \bigl\{0, \eta\bigr\}\cup \mathbb A_\text{log} \Bigr\}\label{defdecandidatesetlog} \\
\mathbb C_\text{lin}(\xi\vert w)&:=\Bigl\{\Pi(\xi + c w)\quad\vert\quad c\in \bigl\{0\bigr\}\cup \mathbb A_\text{lin} \Bigr\}\label{defdecandidatesetlin}
\end{align}
\end{subequations}
The candidate values that minimize the cost function $f$ over the two above sets are denoted as follows ($\xi$ being the current value, the notation $\xi^+_\text{log}$ and $\xi^+_\text{lin}$ are associated to $\xi$):
\begin{subequations}
\begin{align}
\xi^+_\text{log}(w, \eta)&:= \text{arg}\min_{r\in \mathbb C_\text{log}(\xi\vert w,\eta)}\bigl[f(r)\bigr]\label{xiplulog}\\
\xi^+_\text{lin}(w)&:= \text{arg}\min_{r\in \mathbb C_\text{lin}(\xi\vert w)}\bigl[f(r)\bigr]\label{xipluslin}
\end{align}
\end{subequations}
Notice that the optimization problems \eqref{xiplulog} and \eqref{xipluslin} are defined on discrete sets of low cardinality\footnote{Typically, $n_g\in \{3,5,8,10\}$.} $n_g+2$ and $n_g+1$ respectively and will be solved by simple enumeration. 
\e Notice that the use of $\eta$ in \eqref{defdecandidatesetlog} and \eqref{xiplulog} guarantees that the step size $\eta$ remains an available option regardless of the transformation that $\mathbb A_\text{lin}$ might face during the iteration. Similarly, the insertion of $0$ inside $\mathcal C_\text{lin}$ enables to ensure that the  \textit{no acceleration} option is always available at the acceleration step as it is shown later on. 
\e By now we have all we need to move to the presentation of the proposed algorithm.
\section{The Search And Accelerate (\textbf{SaA}) algorithm}\label{sec-algo}
\subsection{Brief presentation of the algorithm}
\noindent As in the original \textbf{FGM} defined by \eqref{FGeq}, the algorithm processes two internal states, namely: 
\begin{itemize}
\item $y_k\in \mathbb R^n$ represents the \textit{official} iterate that represents the intermediate version of the solution to \eqref{optproblem} at iteration $k$. This is obtained after an acceleration step as shown below.
\item $v_k\in \mathbb R^n$ is the iterate that is computed through a step on the gradient line before the  acceleration step that leads to $y_k$. 
\end{itemize} 
The description of the algorithm is all about explaining how it updates the pair $(v_k,y_k)$. This together with some stopping conditions enables repeated application of the updating process over a finite number of iterations, say $N$, leading to the returned solution $y_N$.
\subsection{Detailed description of the steps}
\noindent \textbf{Initialization}. First of all the pair $(v_0, y_0)$ is initialized based on some provided initial guess $x_0\in (\underline x,\bar x)$:
\begin{equation}
k=0,\quad v_k=y_k=x_0\quad;\quad \eta=10^{-16}\label{initalization}
\end{equation}
The integer $n_g$ is chosen which enables to initialize the sets $\mathbb A_\text{lin}$ and $\mathbb A_\text{log}$ according to:
\begin{equation}
\mathbb A_\text{lin}:= \mathbb A_\text{lin}(-0.2,0,n_g)\quad;\quad \mathbb A_\text{log}:= \mathbb A_\text{log}(-8,0,n_g)\label{initaliazinglesAmathbb}
\end{equation}
\textbf{Gradient computation}. The gradient is computed at the current solution using $g_k= g(y_k)$\e 
\textbf{Gradient step}. Using the definition \eqref{xiplulog}, a candidate value $\hat v_{k+1}$ to  $v_{k+1}$ is computed:
\begin{equation}
\hat v_{k+1} = \bigl[y_k\bigr]^+_\text{log}(-g_k, \eta) \label{hatvk+1}
\end{equation}
The relevance of $\hat v_{k+1}$ and the updating it should induce depend on which one of the following possibilities holds: \\
\begin{enumerate}
    \item Either $f(\hat v_{k+1})\ge f(y_k)$ which might be due to smallest step size being too high or simply to the projection operator. In this case a two-side contraction is triggered [see \eqref{defdeLTc}] that updates the set of eligible steps: 
\begin{equation}
\mathbb A_\text{log} \leftarrow \mathcal T_c(\mathbb A_\text{log}) \quad;\quad v_{k+1}\leftarrow v_k\label{contraction}
\end{equation}
the last updating means that the candidate value $\hat v_{k+1}$ is rejected and the iteration stops with only $\mathbb A_\text{log}$ being modified. \\
\item Or $f(\hat v_{k+1})<f(y_k)$ in which case, the update is adopted but two possibilities are to be distinguished as they lead to different updating rules for $\mathbb A_\text{log}$: \\
\begin{enumerate}
    \item If the best value corresponds to the largest update, then an extension [see \eqref{defdeLTE}] is applied to $\mathbb A_\text{log}$: 
    \begin{equation}
\mathbb A_\text{log} \leftarrow \mathcal T_E(\mathbb A_\text{log}) \quad;\quad v_{k+1}\leftarrow \hat v_{k+1}\label{extension}
\end{equation}
\item Otherwise, the best step is an intermediate one and a one-side contraction defined by \eqref{defdeLTr} is applied  as the minimal value is small enough (in order to tighten the interval around the successful interior step size):
    \begin{equation}
\mathbb A_\text{log} \leftarrow \mathcal T_r(\mathbb A_\text{log}) \quad;\quad v_{k+1}\leftarrow \hat v_{k+1}\label{reduction}
\end{equation}
\end{enumerate}
\end{enumerate}
The whole gradient step of the \textbf{SaA} algorithm discussed above is sketched on Figure \ref{fig:gradient-step}.
\begin{figure}[h!]
    \centering
    \framebox{\includegraphics[width=0.99\linewidth]{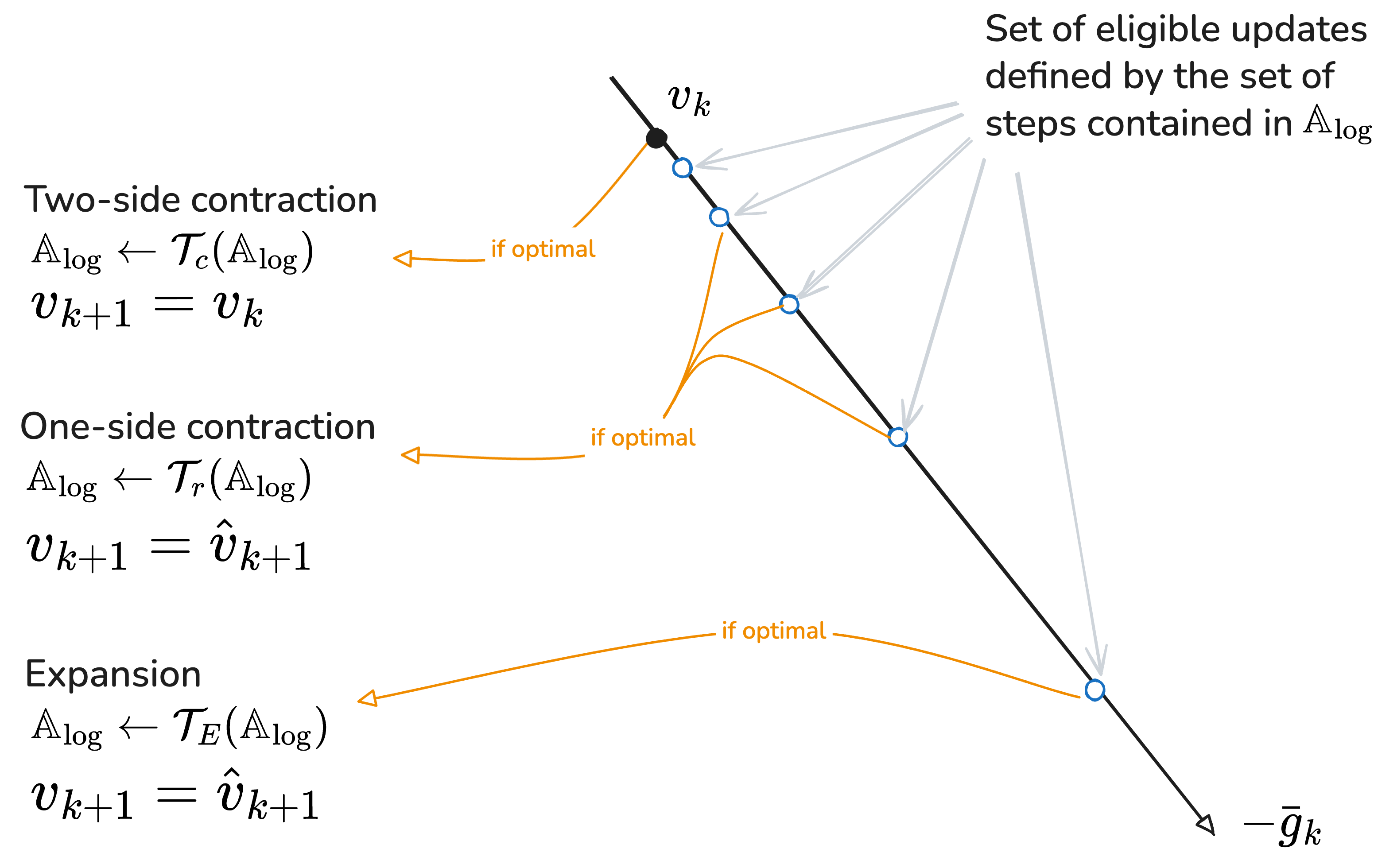}}
    \caption{Sketch of the \textbf{gradient step}. Depending on which element of $\mathbb C_\text{log}(v_k\  \vert -g_k,\eta)$ is optimal, the computed candidate update $\hat v_{k+1}$ is adopted or not and a specific transformation of the set of steps $\mathbb A_\text{log}$ is applied.}
    \label{fig:gradient-step}
\end{figure}
\e 
Recall that if the gradient step is unsuccessful, the current iteration stops, returns the updated set $\mathbb A_\text{log}$. Otherwise the iteration continue with the acceleration step described below. 
\e \textbf{Acceleration step}. The objective of this step is to seek a better iterate that lies on the line defined by $v_{k+1}-v_k$ which is different from zero since as explained earlier, this step is fired only in case of successful gradient step and hence $v_{k+1}\neq v_k$. More precisely, the cost function is evaluated at the elements in $\mathbb C_\text{lin}(v_{k+1}\ \vert\ v_{k+1}-v_k)$ defined by \eqref{defdecandidatesetlin}. Notice that all these elements lie in the admissible domain by virtue of its very definition.
\e Consequently, the acceleration update is given by:
\begin{equation}
y_{k+1} \leftarrow \bigl[v_{k+1}\bigr]^+_\text{lin}(v_{k+1}-v_k)\label{accelerationstep}
\end{equation}
which fully defines the acceleration step that is summarized in Figure \ref{fig:acceleration}. A formal version of the whole algorithm including the gradient and the acceleration steps is also shown in Algorithm \ref{algo}
\begin{figure}[h!]
    \centering
    \framebox{\includegraphics[width=0.85\linewidth]{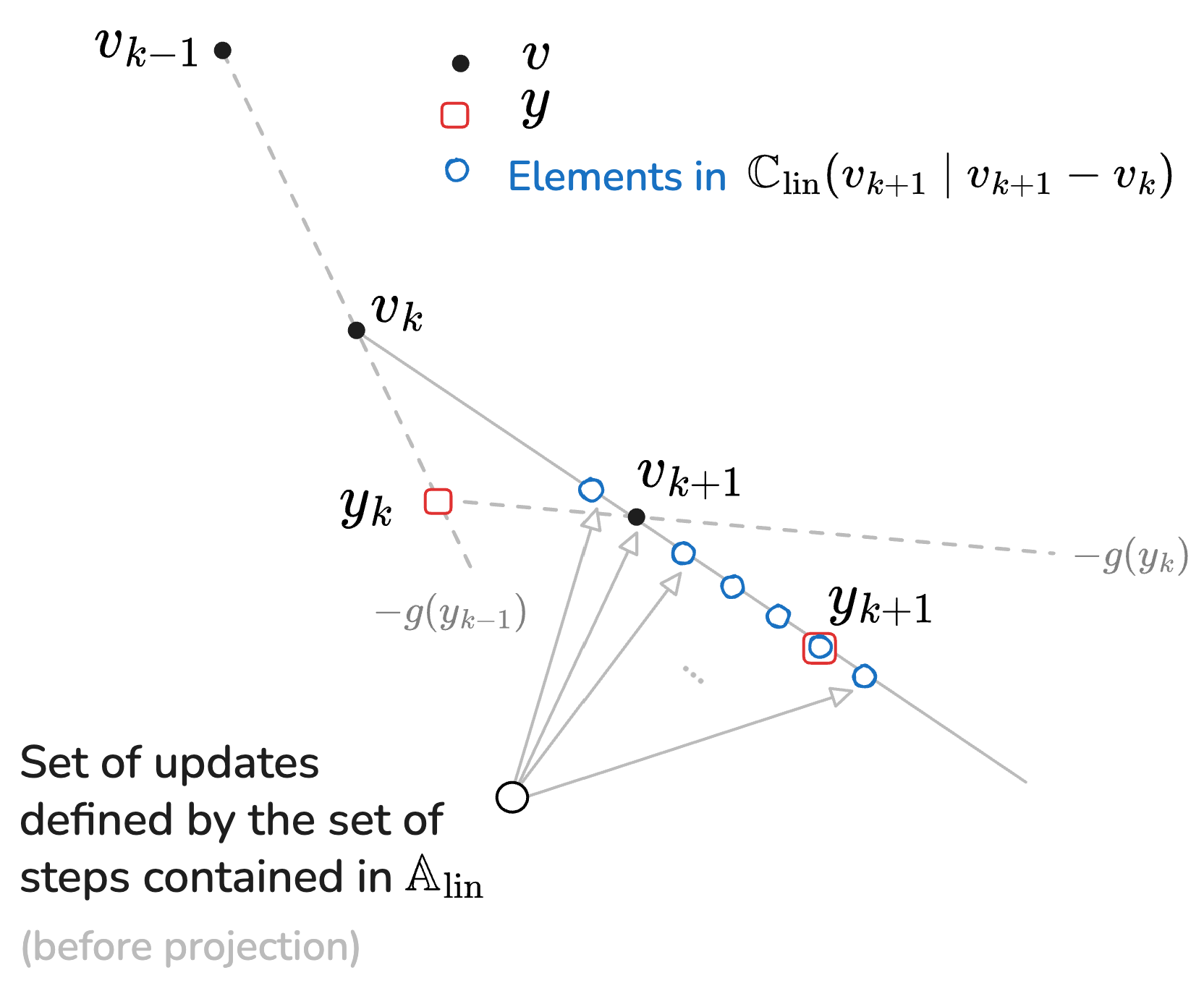}}
    \caption{Sketch of the \textbf{acceleration step}. After two successful gradient steps leading to $v_k$ and $v_{k+1}$, an attempt is made to find a better admissible solution along the direction $v_{k+1}-v_k$. This is done by evaluating the cost function at the projected elements of $\mathbb C_\text{lin}(v_{k+1}\ \vert\ v_{k+1}-v_k)$. (The projection is not shown in the illustration). The presence of possible negative steps (represented by the point lying between $v_{k}$ and $v_{k+1}$) is a posteriori justified by the results of the investigation contained in Section \ref{negative_c}}
    \label{fig:acceleration}
\end{figure}
\begin{algorithm}
\footnotesize
\caption{Search \& Accelerate (\textbf{SaA})}\label{algo}
\begin{algorithmic}[1]  
\Statex \textbf{Input parameters} (with suggested default values): 
\Statex $n_g=5$, $\eta=10^{-16}$, \texttt{maxIter}=200, \texttt{epsG}=$10^{-8}$, $\underline x$, $\bar x$, $x_0$
\Statex \textbf{Initialization}: 
    \Statex $\mathbb A_\text{lin}\leftarrow A_\text{lin}(-0.2,1,n_g)$ \Comment{\scriptsize {\color{Gray} Acceleration search grid \eqref{Alin}}}
    \Statex $\mathbb A_\text{log}\leftarrow A_\text{lin}(-8,1,n_g)$ \Comment{\scriptsize {\color{Gray} Gradient search grid \eqref{Alog}}}
    \Statex $k=0, v_0=y_0=x_0$, \texttt{normG}=$\infty$ \Comment{\scriptsize {\color{Gray} Initialize the internal states}}
    \Statex \hrulefill
    \While{($k\le \texttt{maxIter}$) and (\texttt{normG}$>$\texttt{epsG})}
        \Statex {\color{Gray} ----- The Gradient Step -----}
        \State $g_k\leftarrow g(y_k)$\Comment{{\scriptsize\color{Gray} Evaluate gradient}}
        \State $\hat v_{k+1} \leftarrow\bigl[y_k\bigr]^+_\text{log}(-g_k, \eta)$\Comment{\scriptsize {\scriptsize\color{Gray} Gradient search candidate based on $\mathbb A_\text{log}$}}
        \State $\hat f_{k+1}\leftarrow f(\hat v_{k+1})$\Comment{{\scriptsize\color{Gray} Evaluate cost at the candidate value}}
        \If{$\hat f_{k+1}==f(y_k)$} \Comment{\scriptsize {\color{Gray} $f(y_k)$ is already computed to determine $\hat v_{k+1}$}}
        \State $\mathbb A_\text{log} \leftarrow \mathcal T_c(\mathbb A_\text{log})$ \Comment{\scriptsize {\color{Gray} Two side contraction \eqref{defdeLTc}}}
        \State $(v_{k+1},y_{k+1})\leftarrow (v_k,y_k)$ \Comment{\scriptsize {\color{Gray} Keep the same solution}}
        \State $k\leftarrow k+1$
        \State\texttt{Goto} 3: 
        \Else{}
        \State $v_{k+1}\leftarrow \hat v_{k+1}$ \Comment{\scriptsize {\color{Gray} Adopt the new solution}}
        \If{($\hat v_{k+1}$ results from the last item in $\mathbb C_\text{log}(y_k,g_k,\eta)$)} \Comment{\scriptsize {\color{Gray} see \eqref{defdecandidatesetlog}}}
        \State $\mathbb A_\text{log} \leftarrow \mathcal T_E(\mathbb A_\text{log})$ \Comment{\scriptsize {\color{Gray} Extension}}
        \Else{}
        \State $\mathbb A_\text{log} \leftarrow \mathcal T_r(\mathbb A_\text{log})$\Comment{\scriptsize {\color{Gray} One-side contraction}}
        \EndIf
        \EndIf
        \Statex {\color{Gray}----- The acceleration Step -----}
        \State $y_{k+1} \leftarrow \bigl[v_{k+1}\bigr]^+_\text{lin}(v_{k+1}-v_k)$ \Comment{\scriptsize {\color{Gray} Update $y$ along the acceleration path [see \eqref{accelerationstep}]}}
        \State $k\leftarrow k+1$
    \EndWhile
\end{algorithmic}
\end{algorithm}
\subsection{Analysis of the algorithm's properties}
\noindent The first statement concerns the complexity of the algorithm, namely:
\begin{prop}[Complexity]
A single iteration of the algorithm involves:\\
- A single evaluation of the gradient.\\
- At most $2n_g+3$ evaluations of the cost function.
\end{prop}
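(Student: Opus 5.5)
The count will be obtained by walking through one pass of the loop body of Algorithm~\ref{algo} and tallying, line by line, the calls to $g$ and to $f$. The iteration splits into two cases according to the branch taken after the gradient step: the unsuccessful case, where a two-side contraction is applied and the iteration stops, and the successful case, where the acceleration step is also executed. Since the unsuccessful case runs only a prefix of the successful one, it suffices to bound the successful case.

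\textbf{Gradient evaluations.} The only place where $g$ is called is line~1, $g_k\leftarrow g(y_k)$. Neither the search defining $\hat v_{k+1}$ through \eqref{xiplulog} nor the search defining $y_{k+1}$ through \eqref{xipluslin} needs a gradient: both are solved by plain enumeration of $f$ over finite candidate sets. Hence exactly one gradient evaluation occurs per iteration, whichever branch is taken.

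\textbf{Cost evaluations.} I will count the cardinalities of the two candidate sets.
\begin{itemize}
\item In the gradient step, $\mathbb C_\text{log}(y_k\vert -g_k,\eta)$ is indexed by $\{0,\eta\}\cup\mathbb A_\text{log}$. This set has at most $n_g+2$ elements, and the transformations $\mathcal T_c,\mathcal T_r,\mathcal T_E$ leave the cardinality $n_g$ of $\mathbb A_\text{log}$ unchanged. Its minimization therefore costs at most $n_g+2$ evaluations of $f$. These include $f(y_k)$ (step $0$) and $f(\hat v_{k+1})$, so the tests $\hat f_{k+1}$ versus $f(y_k)$ need no extra calls, as the comment in Algorithm~\ref{algo} already notes.
\item In the acceleration step, $\mathbb C_\text{lin}(v_{k+1}\vert v_{k+1}-v_k)$ is indexed by $\{0\}\cup\mathbb A_\text{lin}$, which has at most $n_g+1$ elements. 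This gives at most $n_g+1$ further evaluations.
\end{itemize}

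Adding these gives $2n_g+3$ in the successful case, while the unsuccessful case uses at most $n_g+2\le 2n_g+3$. This yields the claimed bound.

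The main delicacy is bookkeeping rather than mathematics. One must make sure that no hidden evaluation is incurred: $f(\hat v_{k+1})$ is taken from the enumeration rather than recomputed, and $f(y_k)$ is obtained within the current enumeration rather than in addition to it. It is also worth remarking that a careful implementation could reuse $f(v_{k+1})$ (step $c=0$) from the gradient step, but the statement is only an upper bound and does not require this saving.
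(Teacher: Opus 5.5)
Your proof is correct and is essentially the paper's own argument: one gradient call, at most $n_g+2$ cost evaluations to enumerate $\mathbb C_\text{log}(y_k\vert -g_k,\eta)$, and at most $n_g+1$ further ones over $\mathbb C_\text{lin}(v_{k+1}\vert v_{k+1}-v_k)$ only when the gradient step succeeds, for a total of $2n_g+3$. Two cosmetic slips remain: the gradient is evaluated on line 2 of Algorithm~\ref{algo} (line 1 is the \texttt{while} test), and because the unsuccessful branch jumps back to line 3 the following pass reuses the same gradient, so ``exactly one'' should read ``at most one'', which is all the statement needs.
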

{\sc Proof.} Indeed, the gradient is only evaluated at step 2: then $n_g+2$ evaluations of the cost function are needed\footnote{This is needed to evaluate along the elements defined by $\mathbb A_\text{log}$ plus the two values in $\{0,\eta\}$ in \eqref{defdecandidatesetlog}.} in step 3: in order to find the best elements within $\mathbb C_\text{log}$. Depending on the success/failure of this step, either additional $n_g+1$ evaluations\footnote{This is needed to evaluate along the elements defined by $\mathbb A_\text{lin}$ plus the one involving $0$ in \eqref{defdecandidatesetlin}.} are performed in step 18: or the iteration ended by simply updating the search set parameter in $\mathbb A_\text{log}$. $\hfill \Box$
\e Before we state the main convergence result, we need to recall the following preliminary result \cite{BeckTeboulle2009}:
\begin{lem}[Decrease of the projected gradient update]\label{lem1}\ \\
For a cost function with Lipchitz constant $L$, when the step size $\alpha$ satisfies $\alpha< 1/(2L)$, then the resulting projected update:
\begin{equation}
x_{k+1}=\Pi\Bigl(x_k - \alpha   g(x_k)\Bigr)\label{newx}
\end{equation}
satisfies the following inequality:
\begin{equation}
f(x_{k+1})\le f(x_k)-\Bigl[1-\alpha L/2\Bigr]\|x_{k+1}-x_k\|^2\label{ineq}
\end{equation}
Moreover, if $x^\star$ is a stationary solution to \eqref{newx} then $x^\star$ meets the KKT optimality conditions for the optimization problem \eqref{optproblem}.
\end{lem}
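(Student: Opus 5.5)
The plan is to combine two ingredients: the standard quadratic upper bound (descent lemma) for a function whose gradient is $L$-Lipschitz, and the variational characterization of the projection $\Pi$ onto the box $[\underline x,\bar x]$ defined in \eqref{defdePi}. Throughout, I read ``Lipchitz constant $L$'' in the statement as a Lipschitz constant of the gradient $g$, which is what the rest of the paper uses.

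\textbf{Step 1 (descent lemma).} Integrating $g$ along the segment from $x_k$ to $x_{k+1}$ and using $\|g(x)-g(y)\|\le L\|x-y\|$ gives
\begin{equation*}
f(x_{k+1})\le f(x_k)+g(x_k)^\top(x_{k+1}-x_k)+\tfrac{L}{2}\|x_{k+1}-x_k\|^2 .
\end{equation*}

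\textbf{Step 2 (projection inequality).} Since $x_{k+1}=\Pi(z)$ with $z=x_k-\alpha g(x_k)$, and the box is convex, the projection satisfies $(z-x_{k+1})^\top(y-x_{k+1})\le 0$ for every admissible $y$. For the box this can also be checked coordinate by coordinate from \eqref{defdePi}. Taking $y=x_k$, which is admissible, yields
\begin{equation*}
g(x_k)^\top(x_{k+1}-x_k)\le -\tfrac{1}{\alpha}\|x_{k+1}-x_k\|^2 .
\end{equation*}

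\textbf{Step 3 (combine).} Inserting Step 2 into Step 1 gives
\begin{equation*}
f(x_{k+1})\le f(x_k)-\Bigl(\tfrac1\alpha-\tfrac L2\Bigr)\|x_{k+1}-x_k\|^2 .
\end{equation*}
The condition $\alpha<1/(2L)$ makes the coefficient at least $3L/2>0$, so the projected step strictly decreases $f$ unless $x_{k+1}=x_k$. It then remains to compare this coefficient with the one written in \eqref{ineq}:
\begin{equation*}
\Bigl(\tfrac1\alpha-\tfrac L2\Bigr)-\Bigl(1-\tfrac{\alpha L}{2}\Bigr)=(1-\alpha)\Bigl(\tfrac1\alpha-\tfrac L2\Bigr).
\end{equation*}
This difference is nonnegative whenever $\alpha\le 1$, so \eqref{ineq} follows directly in the regime of small steps used by the algorithm, where the largest initial step is $10^0=1$.

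\textbf{Main obstacle.} The coefficient $1-\alpha L/2$ in \eqref{ineq} is not dimensionally homogeneous with the natural one $1/\alpha-L/2$. For $\alpha>1$ it is not implied by the computation above, since the difference $(1-\alpha)(1/\alpha-L/2)$ is then negative. If the statement is meant for arbitrary $\alpha<1/(2L)$, I would either add the restriction $\alpha\le 1$ or prove and use the sharper coefficient $1/\alpha-L/2$. The latter is the version actually needed for the decrease arguments later in the paper.

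\textbf{Step 4 (KKT part).} Suppose $x^\star=\Pi(x^\star-\alpha g(x^\star))$ with $\alpha>0$, and work coordinate by coordinate with \eqref{defdePi}.
\begin{itemize}
\item If $\underline x_i<x^\star_i<\bar x_i$, the fixed-point relation forces $g_i(x^\star)=0$.
\item If $x^\star_i=\underline x_i$, the clipping by $\max$ must be active, so $x^\star_i-\alpha g_i\le\underline x_i$, which gives $g_i(x^\star)\ge 0$.
\item If $x^\star_i=\bar x_i$, the same argument with the $\min$ gives $g_i(x^\star)\le 0$.
\end{itemize}
Set $\mu_i=\max(g_i,0)$ for the lower-bound multipliers and $\lambda_i=\max(-g_i,0)$ for the upper-bound multipliers. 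These satisfy stationarity $g-\mu+\lambda=0$, nonnegativity, and complementary slackness, which are exactly the KKT conditions for \eqref{optproblem}.
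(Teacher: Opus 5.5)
Your proof is correct. It is the standard Beck--Teboulle argument that the paper's one-line proof only gestures at. The paper attributes \eqref{ineq} to ``the very definition of the Lipchitz constant'' and cites Beck--Teboulle for the KKT part. You instead make explicit the two ingredients actually needed: the descent lemma for an $L$-Lipschitz gradient, and the projection inequality $g(x_k)^\top(x_{k+1}-x_k)\le -\tfrac1\alpha\|x_{k+1}-x_k\|^2$. The projection inequality is indispensable and does not follow from the Lipschitz property alone. You also verify the KKT conditions coordinatewise rather than citing them.

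The obstacle you flag is real, and it is a defect of the statement rather than of your argument. You can say more than ``not implied'': for $\alpha>1$, inequality \eqref{ineq} is false. Take $f(x)=\tfrac L2x^2$ with interior iterates; the descent lemma then holds with equality, so the decrease is exactly $(1/\alpha-L/2)\|x_{k+1}-x_k\|^2$. Concretely, with $L=0.1$, $\alpha=2<1/(2L)$ and $x_k=1$ in $[-5,5]$, you get $f(x_{k+1})=0.032$, whereas \eqref{ineq} would require $f(x_{k+1})\le 0.014$.

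So the correct lemma carries the coefficient $1/\alpha-L/2$, for which $\alpha<2/L$ already suffices. Alternatively, keep the stated coefficient under the extra assumption $\alpha\le 1$.

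This does not damage the paper's convergence argument. However, the reason is that the lemma is only invoked with $\alpha=\eta=10^{-16}$, not that the grid steps are at most $1$. The expansion $\mathcal T_E$ can push elements of $\mathbb A_\text{log}$ above $1$, and Algorithm 1 even initializes the upper exponent at $1$.
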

{\sc Proof}. Inequality \eqref{ineq} is a direct result of the condition $\alpha<1/(2L)$ and the very definition of the Lipchitz constant. As for the second statement, see \cite{BeckTeboulle2009}$. \hfill \Box$
\begin{prop}[Convergence]
If the following conditions hold \\
1) The cost function $f$ admits a Lipchitz constant $L$, \\
2) The cost function is continuous and bounded below,\\
3) The parameter $\eta$ involved in \eqref{defdecandidatesetlog} satisfies\footnote{This condition might be satisfied by taking very small value of $\eta$ without inducing slow convergence as discussed in section \ref{sec:discussion}.}:
\begin{equation}
\eta< \frac{1}{2L}, \label{condition.}
\end{equation}
then algorithm \ref{algo}, when allowed to perform an unlimited number of iterations, converges to a point that meets the KKT optimality conditions for the optimization problem \eqref{optproblem}. 
\end{prop}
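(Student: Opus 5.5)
The argument rests on monotonicity of the cost along the official iterates, plus the fact that the candidate step $\eta$ is always available. The plan has four steps.

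\emph{Step 1: monotone decrease.} At every iteration we have $f(y_{k+1})\le f(y_k)$. If the gradient step fails, $y_{k+1}=y_k$. If it succeeds, then
\begin{equation*}
f(v_{k+1})=\min_{r\in\mathbb C_\text{log}(y_k\vert -g_k,\eta)} f(r) < f(y_k).
\end{equation*}
The acceleration candidate set $\mathbb C_\text{lin}$ contains $v_{k+1}$, corresponding to $c=0$, so $f(y_{k+1})\le f(v_{k+1})$. Since $f$ is bounded below (condition 2), the sequence $f(y_k)$ converges to some $f_\infty$, and $f(y_k)-f(y_{k+1})\to 0$.

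\emph{Step 2: guaranteed decrease, using the $\eta$-candidate.} Set $p_k:=\Pi(y_k-\eta g_k)$, which always belongs to $\mathbb C_\text{log}(y_k\vert -g_k,\eta)$. Because $\eta<1/(2L)$ by \eqref{condition.}, Lemma \ref{lem1} gives
\begin{equation*}
f(p_k)\le f(y_k)-\bigl[1-\eta L/2\bigr]\|p_k-y_k\|^2 .
\end{equation*}
Here $1-\eta L/2>3/4$. By minimality of the selected candidate and Step 1,
\begin{equation*}
f(y_k)-f(y_{k+1})\ge \tfrac34\|p_k-y_k\|^2 .
\end{equation*}
So whenever $p_k\neq y_k$ the gradient step cannot fail. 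Failure, i.e.\ $f(\hat v_{k+1})\ge f(y_k)$, occurs only when $p_k=y_k$, which makes $y_k$ a fixed point of the projected gradient map. By the second part of Lemma \ref{lem1}, such a $y_k$ is a KKT point. From that iteration on, the two-side contractions only rescale $\mathbb A_\text{log}$ and the iterates stay put, so the claim holds trivially in this case.

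\emph{Step 3: limit points are KKT.} Otherwise, summing the Step 2 inequality over $k$ gives
\begin{equation*}
\sum_k\|p_k-y_k\|^2\le \tfrac43\bigl(f(y_0)-f_\infty\bigr)<\infty ,
\end{equation*}
hence $\|\Pi(y_k-\eta g(y_k))-y_k\|\to 0$. All iterates lie in the compact box $[\underline x,\bar x]$, since every candidate is projected, so accumulation points exist. The map $x\mapsto \Pi(x-\eta g(x))-x$ is continuous, because $g$ is Lipschitz and $\Pi$ is nonexpansive. Therefore every accumulation point $x^\star$ satisfies $x^\star=\Pi(x^\star-\eta g(x^\star))$, and by Lemma \ref{lem1} it meets the KKT conditions of \eqref{optproblem}.

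\emph{Main obstacle: the word ``converges'' in the statement.} The argument above only yields subsequential convergence: the sequence lies in a compact set and every limit point is KKT. To upgrade this to convergence of the whole sequence, I would first note that the squared step lengths $\|p_k-y_k\|^2$ are summable, but the actual moves $y_{k+1}-y_k$, which include the large log-grid steps and the acceleration moves, are not directly controlled by this. The fallback is therefore to interpret the statement as convergence of $f(y_k)$ together with KKT-stationarity of every limit point. I would add isolatedness of KKT points as the extra hypothesis under which the whole sequence converges. A second point to handle carefully is that Lemma \ref{lem1} needs $L$ to be a Lipschitz constant of $g$, not of $f$; I would read condition 1 in that sense.
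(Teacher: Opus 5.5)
Your argument follows the paper's own proof. It uses the always-available candidate $\Pi(y_k-\eta g(y_k))$ with Lemma \ref{lem1}, the $c=0$ option in $\mathbb C_\text{lin}$ giving $f(y_{k+1})\le f(v_{k+1})$, and the split between a failed gradient step (which forces a fixed point of the projected-gradient map) and a summable decrease. Your caveat is accurate: the paper's last sentence passes from \eqref{enfin} directly to convergence of $\{y_k\}$. As your compactness-and-continuity argument shows, what actually follows is only $f(y_k)\to f_\infty$ together with KKT-stationarity of every accumulation point.

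One caution on your proposed remedy. The standard way isolatedness of KKT points upgrades subsequential convergence to full convergence is through connectedness of the limit set, which requires $\|y_{k+1}-y_k\|\to 0$. That is precisely the property you observe is not controlled here, because of the large log-grid and acceleration moves. So isolatedness alone does not close the gap.
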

{\sc Proof}. Indeed, each time the gradient step is executed, the optimal solution $\hat v_{k+1}$ shows necessarily a better or equal cost than the one corresponding to the element associated to $\eta$, say $v^{[\eta]}_{k+1}$  in \eqref{defdecandidatesetlog}. But this candidate value associated to $\eta$ is precisely defined by:
\begin{equation}
v^{[\eta]}_{k+1} = \Pi(y_k-\eta g(y_k)) \label{rgf1}
\end{equation}
and since by assumption, $\eta<1/(2L)$, inequality \eqref{ineq} of Lemma \ref{lem1} enables to write (using the notation $\beta:=(1-\alpha L/2)$):
\begin{align}
f(v^{[\eta]}_{k+1}) &\le f(y_k)-\beta\times\|v^{[\eta]}_{k+1}-y_k\|^2\label{jhgtt7}\\
&\le f(y_k)-\beta\times \|\Pi(y_k-\eta g(y_k))-y_k\|^2\label{jhgtt8}
\end{align}
and hence:
\begin{equation}
f(\hat v_{k+1})\le f(y_k)-\beta\times \|\Pi(y_k+\eta g(y_k))-y_k\|^2 \label{kjjg9}
\end{equation}
From here two situations should be considered:
\vskip 2mm\noindent 
1) Either the condition of step 5:, namely $f(\hat v_{k_0+1})=f(y_{k_0})$, is satisfied at some iteration $k_0$, in which case, \eqref{kjjg9} implies that 
\begin{equation}
0=\Pi(y_{k_0}+\eta g(y_{k_0}))-y_{k_0} \label{gftg}
\end{equation}
making $y_{k_0}$ a stationary solution to \eqref{newx} and hence satisfying the KKT condition. 
\vskip 2mm\noindent 
2) Or the algorithm always triggers the branch defined at step 10: leading to $v_{k+1}=\hat v_{k+1}$ and injecting this in \eqref{kjjg9} leads to:
\begin{equation}
f(v_{k+1})\le f(y_k)-\beta\times \|\Pi(y_k+\eta g(y_k))-y_k\|^2 \label{jnbg76}
\end{equation}
and since the acceleration step examine $(n_g+1)$ options inside $\mathbb A_\text{lin}$ that include $v_{k+1}$, thanks to the insertion of $0$ in \eqref{defdecandidatesetlin}, it comes that the finally adopted $y_{k+1}$ is such that:
\begin{equation}
f(y_{k+1})\le  f(v_{k+1})\label{kjj}
\end{equation}
which together with \eqref{jnbg76} implies:
\begin{equation}
f(y_{k+1}) \le f(y_k)-\beta\times \|\Pi(y_k+\eta g(y_k))-y_k\|^2\label{enfin}
\end{equation}
Now since the cost function $f$ is continuous and bounded below, the last inequality means that the sequence $\{y_k\}$ converges to a stationary solution of \eqref{newx} and hence asymptotically meets the KKT conditions which ends the proof. $\hfill \Box$
\subsection{Discussion}\label{sec:discussion}
\noindent The aim of this discussion is to highlight some details of the implementation that enable to overcome the limitations underlined in the introduction and leading to enhancing both performance and stability of the iterations. 
\begin{itemize}
\item[$\checkmark$] Notice first of all the role played by the inclusion of an extremely small\footnote{Default value is $\eta=10^{-16}$.} $\eta$ in $\mathbb C_\text{log}$ which enabled through the equations \eqref{rgf1}-\eqref{kjjg9} to derive the inequality needed for the proof of stability.
\item[$\checkmark$] Contrary to traditional approaches involving constant gradient steps, the use of extremely small values for $\eta$ does not induce slow convergence rates thanks to the fact that the other options in $\mathbb C_\text{log}$ are systematically \textit{investigated} (in step 3:) through function evaluations over the elements of the set $\mathbb C_\text{log}(y_k\vert -g(y_k),\eta)$ enabling fast decrease (when possible) while using very small values of $\eta$. 
\item[$\checkmark$] While traditionally, the acceleration is avoided when the gradient step is dynamically updated, the inclusion of several acceleration gains (including $0$) is here used to get the inequality \eqref{kjj} that guarantees the overall decrease in an acceleration-enabled context. This enables to combine the search for better decrease in the gradient step while allowing for the acceleration step to be examined. 
\item[$\checkmark$] Contrary to the \textbf{FGM} method, the previous analysis of the algorithm's properties does not include formal results regarding the convergence rate. While this might be attempted, it is our belief that what does really matter in NMPC is the behavior of the algorithm during the first iterations, namely, the iterations that last less than the control updating period. The results shown in the next section suggests a quite impressive experimentally assessed convergence rate. It might even be conjectured that the will to get a provable convergence rate of $1/k^2$ for the \textbf{FGM} lies behind not using the so simple ideas proposed in the current contribution which leads to much better results with guaranteed stability.
\end{itemize}
The objective of the next section is to examine and compare such a behavior among algorithms belonging to few different families and when used to address a high number of randomly generated problems that belong to an existing and available benchmark of problems that are described hereafter. 
\section{The benchmark}\label{sec-bench}
\noindent The benchmark used here in the evaluation of the algorithm is made publicly available in a \textsc{Kaggle} repository \cite{mazen_alamir_2026_kaggle_bench_opt}. It consists in 600 optimization problems\footnote{The repository enables to download the list of problems as python objects that export two methods representing the cost function and the gradient.} of the form:
\begin{equation}
\min_{x\in \mathbb X} f(x):=\bigl\vert P(x)-P(x^\star)\bigl\vert^{2m}\quad;\quad \mathbb X:=[-5,+5]^n\label{defdepbkaggle}
\end{equation}
where $P$ is a multivariate polynomial in the variable $x\in \mathbb R^{n}$. Both $P$ and $x^\star$ take different value depending on the problem and are randomly generated. The degrees of the polynomial takes values inside $\texttt{deg}\in \{1,2,3\}$, the integer $m\in \{1,2\}$. This means that the cost functions are polynomial of degrees that lie inside $\{2,4,\dots,12\}$. The dimension of the decision variable takes different values ranging from $n=2$ to $n=1000$, more precisely, 10 different values of $n$ are used inside the following set:
\begin{equation}
n\in \{2, 5, 10, 15, 20, 50, 100, 200, 500, 1000\}\label{defdesnkaggle}
\end{equation} 
For each of the 60 values of the triplet $(\texttt{deg},n,m)$, 10 random problems are generated. For Five of these 10 problems, the value of $x^\star$ is randomly generated \textbf{inside} the admissible set $\mathbb X$ while in the remaining 5, $x^\star$ is forced to lie outside the hyper cube. By so doing, we force the optimal value of the cost function to be equal to $0$ for the first group of problems while this is not guaranteed for the second so that many instances of the problem correspond to optimal solutions that are not stationary points for the gradient. 
\e Notice that given the relatively high degrees of the polynomials involved in the cost functions, there might be several minima with zero value that lie inside the admissible hyper-cube $[-5,+5]^n$ so we focus on the value of the cost function rather than on how the solutions found by the algorithm relate to the specific value of the parameter $x^\star$ for each problem. 
\e In order to get comparable and reproducible results, the same initial guess $x_0=(0.1,\dots,0.1)^T$ is used for all the algorithm and all the instances of the problem.
\section{Numerical investigation}\label{sec-results}
\noindent Let us start by stating the precise settings for the different solvers.
\subsection{Settings for alternative solvers}\label{sec:settings}
\noindent As previously explained, the ability of a solver to operate with a unique setting for all possible problems is mandatory should it be applied to NMPC. That is the reason why comparisons are drawn between solvers that keep their setting unchanged over the whole set of 600 optimization problems included in the benchmark. 
\e Nevertheless, for a fair comparison, different such \textit{unchanged} settings might be used in order to show the extent to which the setting impacts the comparison. 
\subsubsection{Settings for \textbf{FGM}}
\noindent For this solver, the only possible parameter is the gradient step size $\alpha$ (which is expected to be lower than the unknown $1/L$. The $\beta_k$ gain used in the acceleration step is defined following the rules defined in \cite{nesterovfg2004} which is supposed to induce stability provided that the previous condition on $\alpha$ is satisfied. In order to show the impact of this choice, three values of $\alpha$ are investigated, namely $$\alpha\in \{10^{-5}, 10^{-10}, 10^{-16}\}$$ as these values enable to explore fast but risky settings as well as safe but potentially inefficient settings.
\subsubsection{Settings for \textbf{FISTA}}
\noindent Recall that \textbf{FISTA} implements an estimation of $L$ through a backtracking process. More precisely, an initial estimate $L_0$ is initialized and the process amounts at increasing the value of the current estimation each time a failure in decrease is encountered. The expansion rate is set to $2$ while two different values of the initial guess $L_0$ are investigated, namely: $$L_0\in \{10^{-6}, 1, 10^6\}$$ 
Notice that \textbf{FISTA} implements an acceleration step that is the same as in the \textbf{FGM}.
\subsubsection{Settings for \textbf{Strong Armijo-Wolf}}
\noindent This method does not implement an acceleration step but uses more involved line search that aims at increasing the efficiency of the gradient step. As explained earlier, this involves the enforcement of the decrease of the cost function and the decrease of gradient norm  (curvature condition). These two conditions involve two constants $c_1$ and $c_2$ which are taken respectively equal to the default values recommended in \cite{MoreThuente1994, HagerZhang2005}, namely:
$$c_1=10^{-4}\quad;\quad c_2=0.9$$
The initial value of $\alpha$ is taken equal to $1$ and the adaptation gain $\beta=0.5$ is used. As this method is an adaptive method, the impact of these parameters on the results is minor (and it appeared to be so in the experiments). Therefore, the unique setting described by the above values is used for this method in order to avoid overloading the comparison figure's below. 
\subsubsection{Settings for the proposed \textbf{SaA} algorithm}
\noindent As far as the proposed algorithm is concerned, only the number $n_g$ of search values inside the set of steps $\mathbb A_\text{log}$ and $\mathbb A_\text{lin}$ is explored. More precisely, the following values are tested: 
$$n_g\in \{3, 8, 20\}$$
each of which is kept constant over all the problems. All the other parameters of the algorithms take the default value shown in the description of the Algorithm [see the initialization step of Algorithm \ref{algo}).
\subsection{Results on the benchmark's problems}
\noindent In the forthcoming experiments, a maximum number of iterations is fixed to \texttt{maxIter}=200 with an early stopping criterion associated to the norm of the gradient being lower than $10^{-8}$.
\subsubsection{Cost achieved with $\texttt{maxIter}=200$}
\noindent Figure \ref{fig:finalCost} shows the statistics (percentiles) of the final achieved cost within the allowed 200 iterations for the different pairs of (algorithm/settings) described in Section \ref{sec:settings}.\e 
\begin{figure}[h!]
    \centering
    \includegraphics[width=\linewidth]{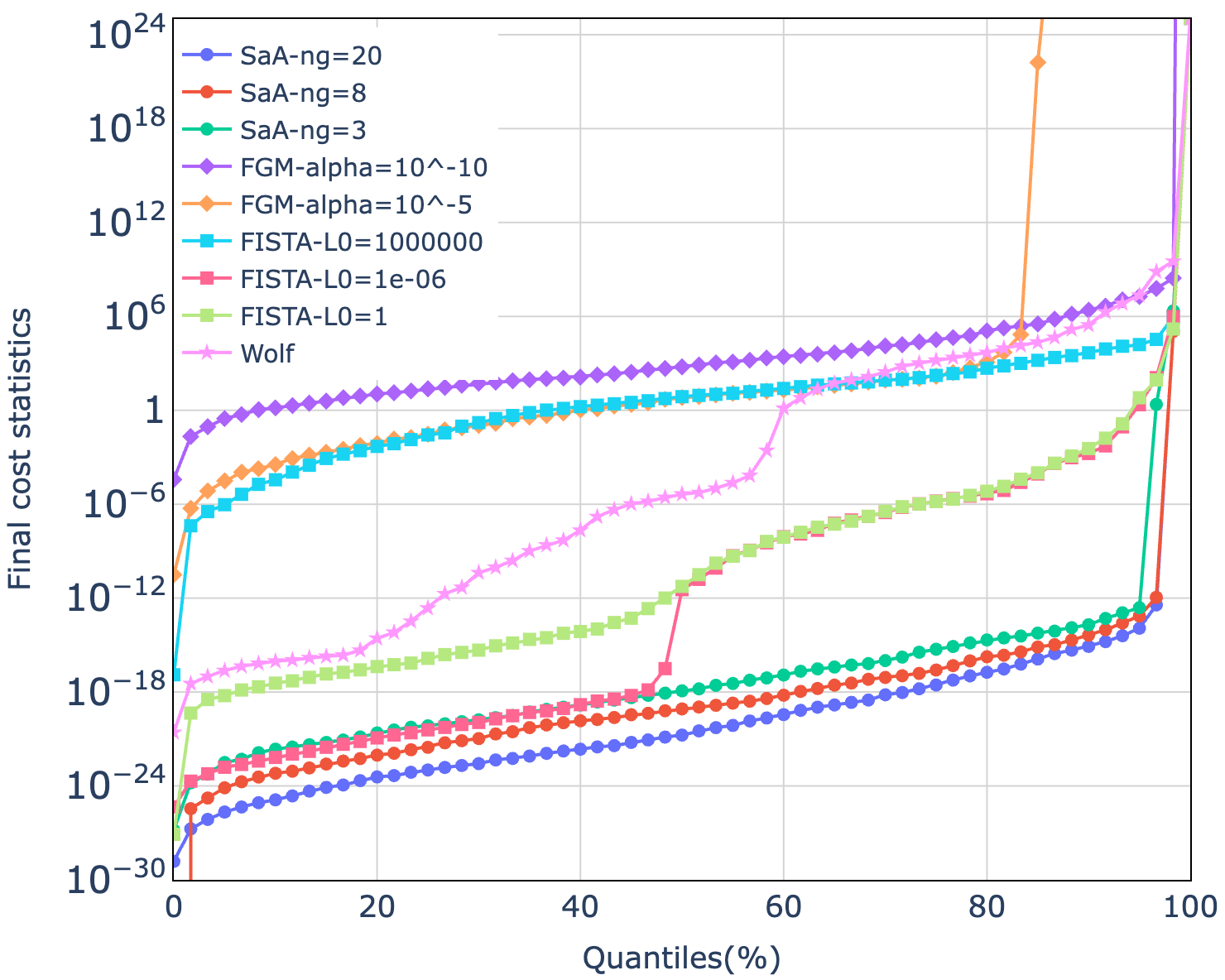}
    \caption{statistics (percentiles) of the final cost achieved within the allowed 200 iterations for the different pairs of (algorithm/settings) described in Section \ref{sec:settings}.}
    \label{fig:finalCost}
\end{figure}
This figure suggests the following comments: 
\begin{itemize}
\item First of all the \textbf{FGM} with high $\alpha$ ($10^{-5}$) diverges for a non negligible part of the set of problems while the one with much lower $\alpha$ ($10^{-10})$ shows, as expected, rather bad achieved cost resulting from the slow convergence and the limited number of iterations. This is precisely the unsolvable dilemma that is associated to the choice of a constant (non adaptable) $\alpha$ that has been already discussed earlier in the paper. Given the bad performance of \textbf{FGM} with $\alpha=10^{-10}$, the results using $\alpha=10^{-16}$ were omitted in the figure.
\item Regarding \textbf{FISTA}, the use of backtracking penalizes the use of high values of initial $L_0$ inducing slow convergence. Smaller values give better results which remains highly outperformed by the \textbf{SaA} in all its settings. This is most likely due to the absence of search in the acceleration step. This results in orders of magnitude in the final achieved cost over almost 40\% of the problems included in the benchmark. 
\item Regarding the \textbf{Srong Armijo-Wolf} line-search based algorithm, it clearly lies between the \textbf{FGM} and the \textbf{FISTA} with quite bad final achieved cost over $40\%$ of the problems.
\item The proposed \textbf{SaA} clearly outperforms the other alternatives and inside the set of settings for \textbf{SaA} and as far as the final cost within the 200 allowed iterations is considered, the higher $n_g$ the lowest is the final achieved cost which is no surprise as long as the associated computation cost is not investigated yet. 
\end{itemize}
It is worth underlying that the highest values of the final cost (around $10^{25}$ might suggest that the iteration diverges for some of the problems and this even for \textbf{SaA} but this is not the case as it is shown in Figure \ref{fig:ratios}. Indeed this figure shows the ratio between the finally achieved cost and the initial value of the cost for each of the 600 problems. This figure shows that except for one problem where the final cost is infinitely close to the initial one (contraction lower than 0.99), all the instances of the problem correspond to a (rather drastic) contraction in accordance with the convergence result.
\begin{figure}[h!]
    \centering
    \includegraphics[width=\linewidth]{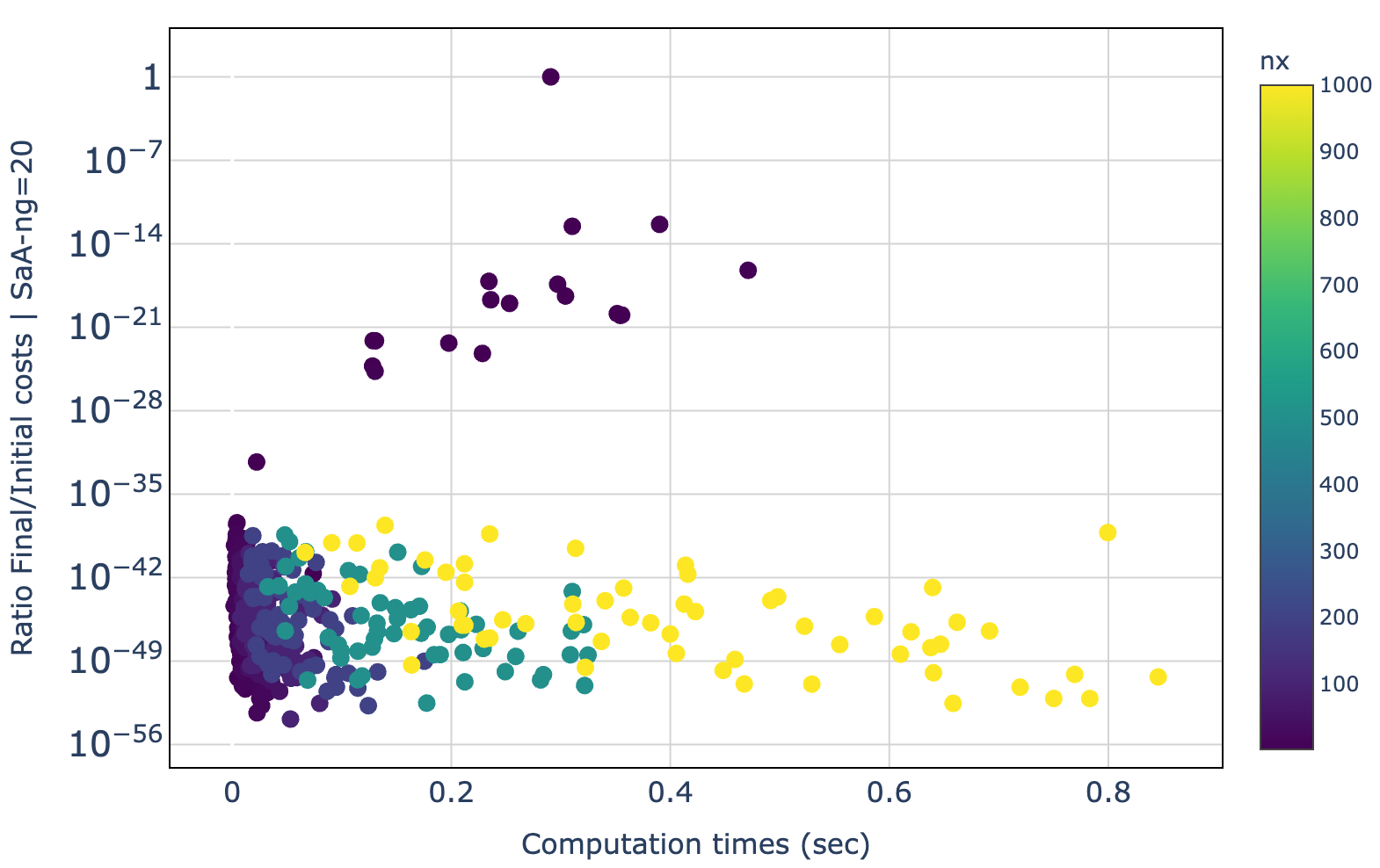}
    \caption{The ratios between the finally achieved cost and the initial value of the cost for each of the 600 problems. Notice that the colors refer to the dimension of the decision variables. }
    \label{fig:ratios}
\end{figure}
\e 
\noindent In order to explain the very high values of the final cost for a tiny fraction of the problems as shown in Figure \ref{fig:finalCost}, it is worth remembering that while the admissible set is $[-5,5]^n$, the vector $x^\star$ involved in \eqref{defdepbkaggle} is not necessarily inside this set. This with the fact that, as it is explained in Section \ref{sec-bench}, some of the optimization problems shows a cost function that is a polynomial of degree 12 (this is the case when \texttt{deg}=3 and $m=2$) lead to few problems for which the algorithm reaches rapidly a KKT point that lies on the boundary of the admissible hypercube and which corresponds to a cost function value that is extremely close to the value at the initial guess. 
\e 
From the results of Figure \ref{fig:finalCost}, it is fair to conclude that \textbf{SaA} achieves the best costs  among almost all the 600 optimization problems. Actually the returned solutions represent the optimal cost for more than 97\% of the problems as the optimal returned values are numerically $\approx$ 0, for the remaining problems it can be conjectured that the returned solutions which are not vanishing (because when $x^\star$ lies outside the admissible domain), the truly optimal value may not be $0$. Figure \ref{fig:ratios} shows that for all the problems, except one, the contraction of the cost function value is better than $10^{-12}$.
\e This being said, we still need to investigate whether this performance in terms of the final cost is obtained at the cost of larger computation times.
\subsubsection{Computation time}

\begin{figure}[h!]
    \centering
    \includegraphics[width=\linewidth]{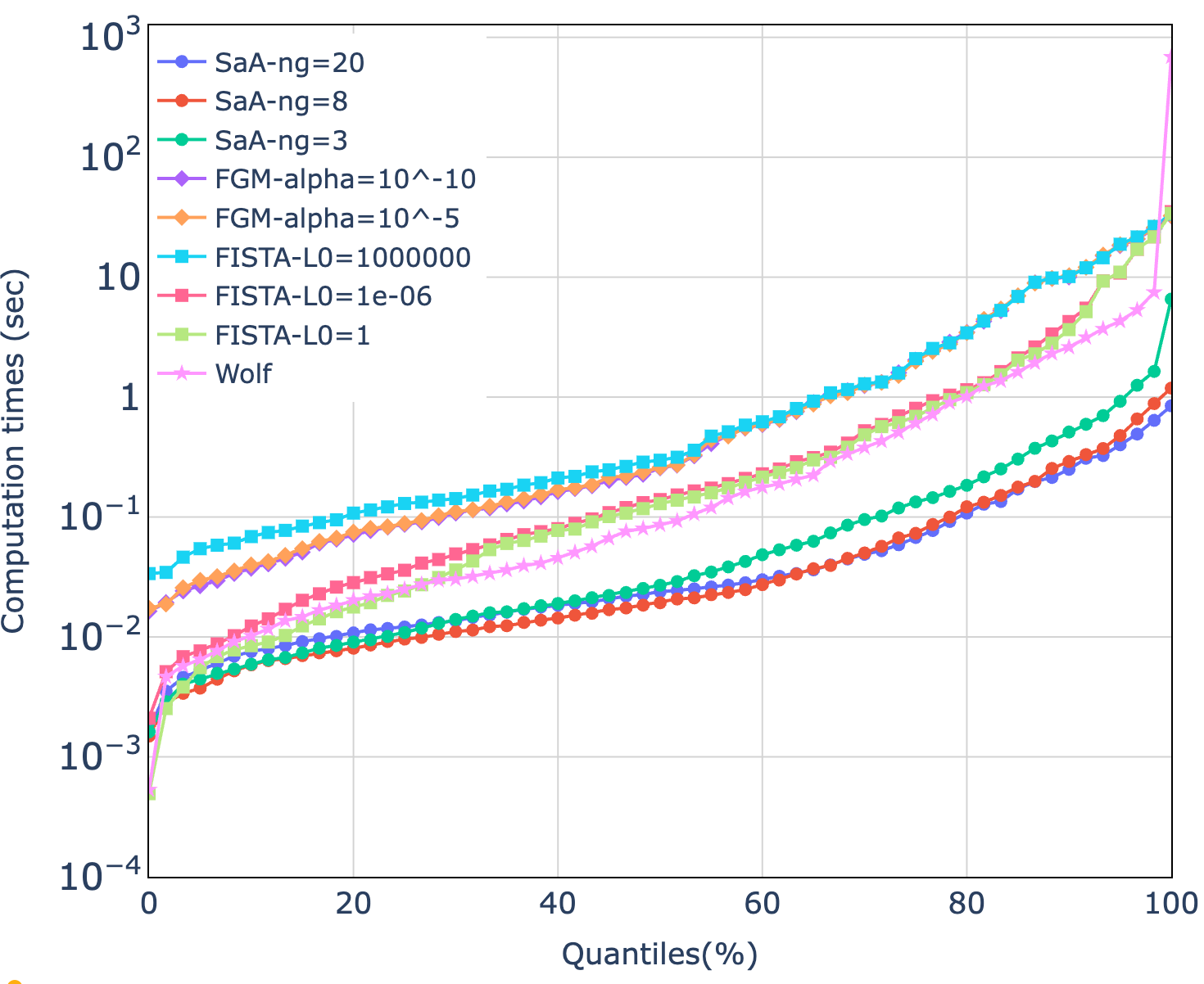}
    \caption{Statistics of the computation times when solving the optimization problems included in the benchmark using the different pairs of (algorithm/setting).}
    \label{fig:cpu}
\end{figure}
\noindent Figure \ref{fig:cpu} shows the statistics of the computation times when solving the optimization problems included in the benchmark. This figure shows that for almost all the problems, the proposed algorithm (\textbf{SaA}), regardless of its settings (value of $n_g$) achieves the best cost and simultaneously needs less time than all the other alternatives. For relatively large computation times, the \textbf{SaA} is more than 10 times faster than the closest of the other solvers.
\e A noticeable fact is that when examining the different settings of \textbf{SaA}, the computation time does not necessarily increase when the number of exploration points $n_g$ increases. This fact can be clearly observed when comparing the computation times for settings defined by $n_g=20$ and $n_g=8$. This can be easily explained by the impact of this choice on the  number of required iterations before reaching a solution (see Figures \ref{fig:Niterations}-\ref{fig:Niterations_zoom}).
\e Indeed Figure \ref{fig:Niterations} and its zoomed version depicted on Figure \ref{fig:Niterations_zoom}, show statistics of the number of iterations (necessarily limited to 200 by choice) needed by the different solvers. The curves enable to understand why, in the case of \textbf{SaA}, despite the higher cost of a single iteration induced by the choice $n_g=20$ compared to $n_g=3$, the number of needed iteration is lower as the rate of decrease of the cost function per iteration is higher.
\e Figure \ref{fig:Niterations} also shows that because of the comparatively slow convergence of the \textbf{FGM}, all the settings of this algorithm needed the maximum number of iteration (200) as the stopping criterion is never satisfied for these alternatives. This is also the case for the \textbf{FISTA} when the conservative $L_0=10^6$ is  used. On the other hand, taking too small a value, like $L_0=10^{-6}$ induces higher number of iterations as the backtracking algorithm needs more time to reach a good estimation of the Lipchitz constant $L$. 

\begin{figure}[h!]
    \centering
    \includegraphics[width=\linewidth]{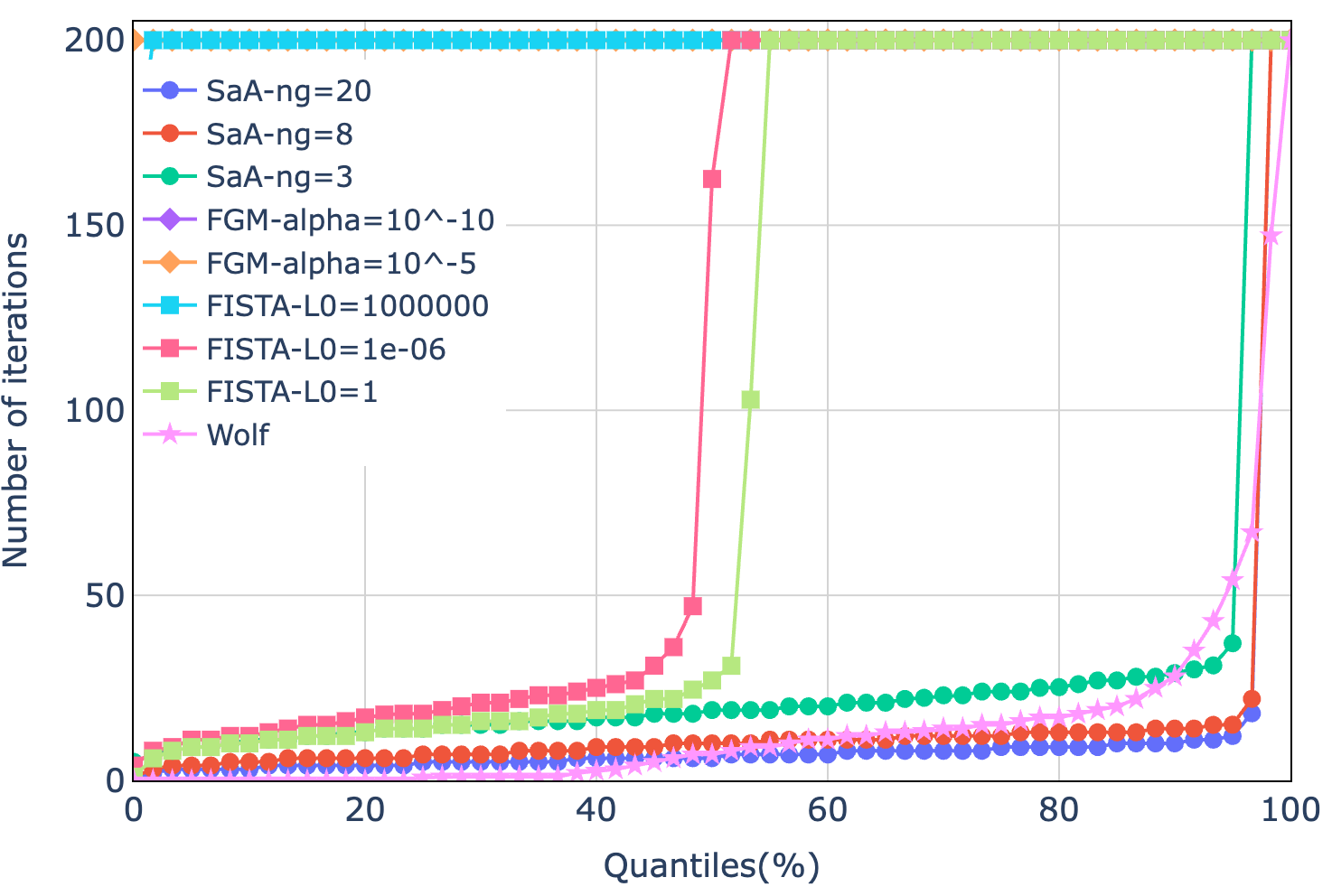}
    \caption{Statistics of the number of iterations (necessarily limited to 200 by choice) needed by the different solvers}
    \label{fig:Niterations}
\end{figure}

\begin{figure}[h!]
    \centering
    \includegraphics[width=\linewidth]{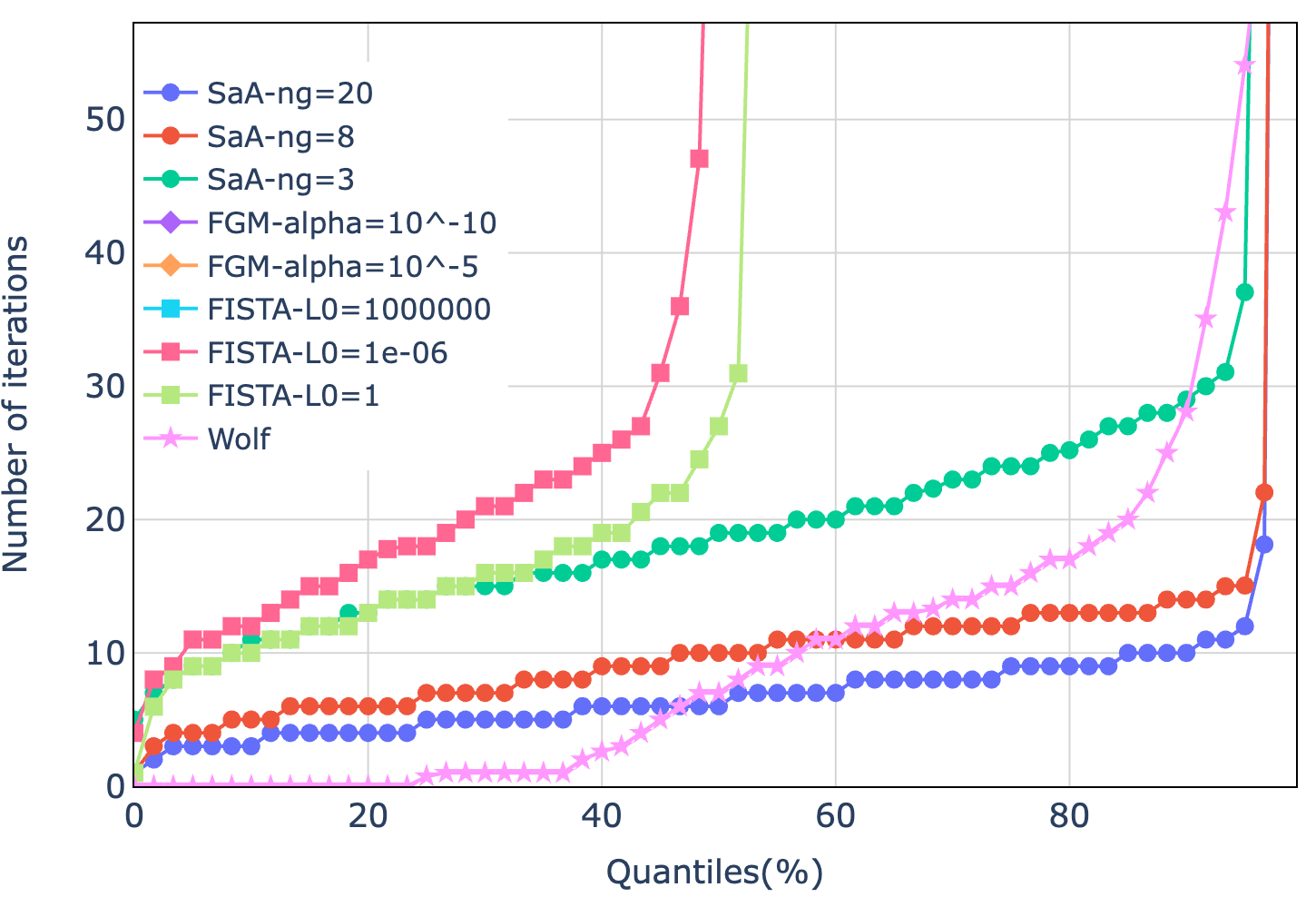}
    \caption{Zoom on the plot of Figure \ref{fig:Niterations} centered on the small number of iterations required by the proposed \textbf{SaA} algorithm.}
    \label{fig:Niterations_zoom}
\end{figure}
\subsubsection{Statistics of the steps along the acceleration path}\label{negative_c}
\e Let us end this section by investigating the statistics of the value taken by the acceleration step, namely the step along the direction $v_{k+1}-v_k$ implemented in step 18: of the algorithm which implements the updating rule \eqref{accelerationstep}. Recall that possible values of this step is controlled by the set of steps included in the set $\mathbb A_\text{lin}$ defined by \eqref{Alin}. For a step equal to $0$ no acceleration is applied and the update resulting from the searched gradient step is adopted. When positive values are used, this implement a positive momentum along the direction $v_{k+1}-v_k$. Finally, negative steps lead to iterate that strictly lie inside the segment $(v_k,v_{k+1})$.
\e The statistics of the 20,400 step values selected by the \textbf{SaA} algorithm over the 600 problems of the benchmark by the three investigate settings (corresponding to $n_g\in \{3, 8, 20\}$ which changes the values in $\mathbb A_\text{lin}$) lead to the normalized histogram shown in Figure \ref{fig:statisticsofc}.
\begin{figure}[h!]
    \centering
    \includegraphics[width=\linewidth]{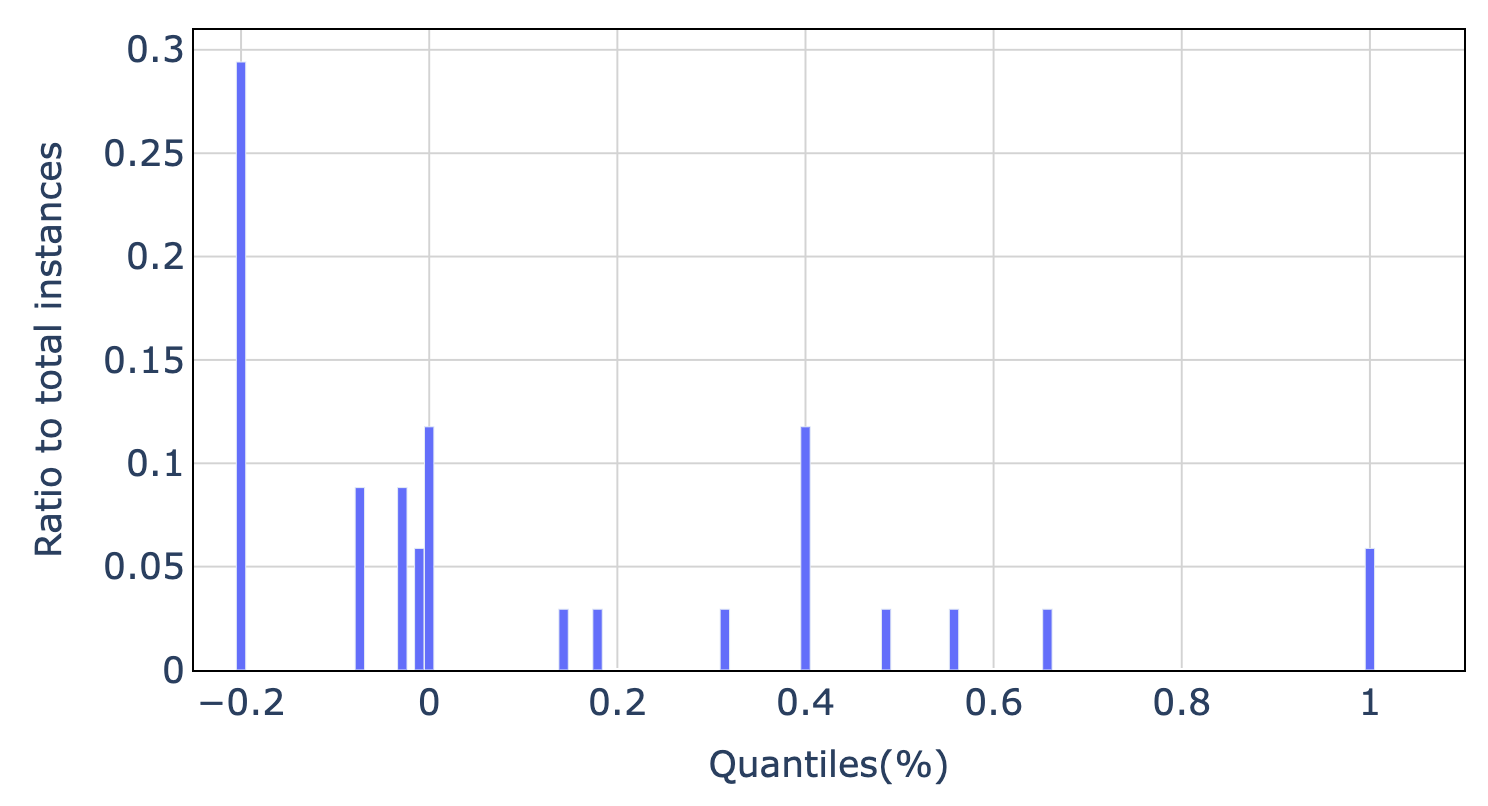}
    \caption{Statistics (Histogram) of the step sizes $c$ used in the acceleration step and taken by the three different versions of the \textbf{SaA} algorithm during the 600 optimization problems contained in the benchmark and for different setting of $n_g\in \{3,8,20\}$ used in the proposed solver.}
    \label{fig:statisticsofc}
\end{figure}
\e The remarkable fact is that the negative value $c=-0.2$ appears to be the most frequent value selected by the algorithm (almost 30\% of the instances). This results is not a priori expected and legitimates the definition of the the set of possible steps $\mathbb A_\text{lin}$ in that it should contain negative values. The presence of some $12\%$ of instances where $c=0$ is chosen corresponds to situations where using the acceleration would have been detrimental to the objective of decreasing the cost function leading to the use of the sole gradient descent step.
\e The examination of the results enables to state the following claim: 
\begin{center}
\begin{tikzpicture}
\node[rounded corners, inner sep=5mm, fill=Black!5, draw=Black] (O){
\begin{minipage}{0.4\textwidth}
Despite the variety of problems included in the benchmark, the default parameters together with any of the suggested values of $n_g$ enable to successfully solve all the problems in less than $200$ iterations. This is obtained thanks to the adaptation mechanisms and the incorporation of cost function exploration at each iteration of the algorithm. 
\e 
This justifies the \textit{parameters-free} qualification of the proposed algorithm that is used in the paper's title. This does not mean the absence of parameters but the robustness of the results to their specific choice making the default values always successful.
\end{minipage}
};    
\node[rounded corners, fill=white, draw=Gray] at(O.north){
\footnotesize \sc  Parameter-free?
};
\end{tikzpicture}
\end{center}
\subsection{Example of use in NMPC real-time implementation}
\noindent In this section, we investigate the use of the proposed gradient based algorithm in implementing constrained NMPC feedback control. But before we dig into this task, let us highlight the following disclaimer: 

\begin{center}
\begin{tikzpicture}
\node[rounded corners, inner sep=5mm, fill=Black!5, draw=Black] (O){
\begin{minipage}{0.4\textwidth}
The message of this section is not to promote the unconditional use of the proposed gradient-based solver in the implementation of NMPC feedback. Rather, it is to consider it as a viable option for \textit{certification} when very short control updating periods are mandatory or simply when standard \textit{advanced} solvers do not meet the real-time implementation requirements.
\end{minipage}
};    
\node[rounded corners, fill=white, draw=Gray] at(O.north){
\footnotesize \sc  Disclaimer
};
\end{tikzpicture}
\end{center}
Notice that the need for certification holds for any NMPC implementation in which the optimization process is interrupted and the necessarily sub-optimal solution is used to define the control to be applied. This is because in such circumstances, the theoretical foundation of the stability are not satisfied regardless of the NMPC formulation being used. For more details about the certification of NMPC, see \cite{ALAMIR201565, alamir-dashboard2025}.
\subsubsection{The NMPC problem}
\noindent For the sake of illustration, we consider the Planar Vertical Take-of and Landing (\textbf{PVTOL}) aircraft example \cite{HAUSER1992665} whose dynamics can be described by the following set of ODEs:
\begin{subequations}
\begin{align}
\ddot y&=-u_1\sin\theta+\epsilon u_2\cos\theta,\label{pvtol1}\\
\ddot z&=+u_1\cos\theta+\epsilon u_2\sin\theta-1,\label{pvtol2}\\
\ddot\theta &= u_2
\end{align}
\end{subequations}
that we need to regulate around $(y,z,\theta)=0$ using bounded control $$u\in [-1.5,+1.5]\times[-0.5, 0.5]$$
while meeting the following constraints on the state (limited rate of change on $z$ and $\theta$):
\begin{equation}
\vert \dot z\vert \le 0.5\quad;\quad \vert \dot \theta\vert \le 0.4\label{NMPCconstraints}
\end{equation}
The NMPC formulation used hereafter is the one recently proposed in \cite{Alamir-relaxing-2025} that shows the advantage of providing stability for prediction horizons that do not meet the reachability condition. The cost function is given by:
\begin{equation}
J(\mathbf u\vert x_0):=\gamma^2\|\dot x_N\|+\gamma \ell(x_N)+\|x_N\|_{Q_f}+\sum_{k=0}^{N}\ell(x_k,u_k) \label{cost}
\end{equation}
where $\ell(x,u)$ incorporates the original regulation cost augmented with the exact penalty on the constraint violation:
\begin{equation}
\ell(x,u):=\|x-x_d\|^2_Q+\|u-u_d\|^2_R +\rho_\text{soft}\times \sum_{i=1}^4\max\bigl\{0,c_i(x)\bigr\}^2\label{defdeell}
\end{equation}
where $c_i(x), i=1,\dots,4$ correspond to the four constraints defined by \eqref{NMPCconstraints}. The steady state is defined by $x_d=(y_d,z_d,0,0,0,0)$ and the steady control is $u_d=(1,0)$.
\e 
In all the numerical investigation $\gamma=200$ and $\rho_\text{soft}=10^7$ are used. The parameters $\epsilon=0.4$ is used in \eqref{pvtol1} and \eqref{pvtol2}. The sampling period $\tau=0.1$ and the  prediction horizon $N=50$ are adopted. 
\subsubsection{The alternative solver: \textbf{fatrop}}
\noindent In order to compare the performance and the computation times of the proposed \textbf{SaA} algorithm with a standard state of the art higher order solvers, the recently proposed \textbf{fatrop}\footnote{This is an Interior-Point algorithm that is dedicated to solving optimal control problem unlike the famous \textbf{ipopt} \cite{wachter2006ipopt} commonly used as the default option within \texttt{Casadi} that is a general purpose interior point algorithm.} algorithm is used in the comparison. The reason for this choice is twofold: 
\begin{itemize}
\item First, this solver is smoothly integrated in the \texttt{Casadi} framework which is also used in the computation of the cost function and the gradient as used by the \textbf{SaA} algorithm. The use of the famous state of the art \textbf{acados} \cite{Verschueren2021acados} solver requires a reformulation of the problem to fit its own syntax. 
\item More importantly, it has been shown that \textbf{Fatrop} provides performance and computation times that are quite comparable to the \textbf{acados} solver. In particular Table IV in \cite{VanroyeLander2023FAFC} where some examples of the same \textit{category}\footnote{number of state, control inputs, prediction horizon, etc.} of problems as the one considered here are investigated, shows that \textbf{fatrop} and \textbf{acados} \cite{Verschueren2021acados} produce quasi identical results within comparable computation times. 
\end{itemize}
When comparing \textbf{SaA} to \textbf{fatrop}, two versions will be compared, namely, the pure python version and the C++/compiled version of each algorithm. More precisely:
\begin{itemize}
\item In the pure python version, the optimization problem is simply formulated using the \texttt{Casadi} framework's \texttt{nlpsol}  without compilation option for \textbf{fatrop} while the cost function and the gradient map are also provided using the \texttt{Casadi}'s \texttt{Function} utility (without compilation option) starting from the same definitions of the maps used as argument to \texttt{nlpsol} to set the \textbf{fatrop} solver. 
\item In the compiled versions, the \texttt{nlpsol} is called with the following options dictionary:
\begin{center}
\begin{tikzpicture}
\node[rounded corners, fill=Black!5, draw, inner sep=6mm](T){
\begin{minipage}{0.3\textwidth}
\footnotesize
\begin{lstlisting}[language=Python]
jit_opts = {
            "jit": True,
            "compiler": "shell",
            "jit_options": {
                "flags": '-O2'
                }
            }
\end{lstlisting}
\end{minipage}
};
\node[rounded corners, fill=white, draw=Black]at(T.north){
\footnotesize Compilation option
};
\end{tikzpicture}
\end{center}
which provides an optimized \texttt{casadi} native compiled version of the \textbf{fatrop} algorithm. As for the \textbf{SaA} algorithm, the same compilation options is used in the creation of the cost function and the gradient through the \texttt{Function} utilities provided by \texttt{Casadi}. This is not necessarily the most optimized version as all the remaining computations involved in the \textbf{SaA} such as the multiple calls for the cost function during the two line search operation are still coded in python leading to an extra calls-related overhead that can be removed by totally compiling the whole algorithm. 
\end{itemize}
\begin{rem}[Why to show pure python comparison also]
It is important to underline that while the compilation of the \textbf{SaA} algorithm needs few seconds, the one involved in the use of the above mentioned compilation option in \texttt{nlpsol} requires around 30 min of processing\footnote{On a MacBook Pro, Apple M3 Pro, 18 Go under Tahoe 26.5.1}. This can become cumbersome during the mandatory prototyping phase. That is the reason for which comparisons are also given for pure python versions of the algorithms. 
\end{rem}
\subsubsection{Protocol and metrics for the comparison}
\noindent Without loss of generality, the targeted steady state is taken equal to $x_d=0\in \mathbb R^6$ and $u_d=(1,0)$ and only the initial state is selected to define a scenario. This can recover any possibe pair of $(x_0,x_d)$ of initial state and desired state by simple translation. 
\e A number  $\texttt{nSamples}=100$ of scenarios are generated corresponding to different initial states inside $[-0.5,0.5]^6$. For each scenario, closed-loop trajectories under the different NMPC controllers are simulated over $N_{sim}=251$ sampling periods and the resulting closed-loop trajectories and computation times are compared following the protocol defined below. 
\e 
First of all, in order to compare the convergence behavior, a convergence-related scalar is associated to each closed-loop trajectories using the following steps:
\begin{itemize}
\item The sequence of values of the \textit{returned} sub-optimal open-loop cost obtained by the optimizer at each instant along the closed-loop trajectory is obtained, say $J^\text{cl}_k$. For well formulated NMPC with completed optimization, this sequence should be strictly decreasing.
\item A normalized sequence is obtained by using:
\begin{equation}
\bar J_k^{cl}:=\dfrac{J_k^\text{cl}}{J_0^\text{cl}} \label{defdeJkbar}
\end{equation}
For typically successful closed-loop, this sequence is strictly decreasing with $1$ as initial value. 
\item One way to associate a scalar that summarizes the possible contraction along the sequence over the whole closed-loop trajectory is to find the maximum rate of exponential decrease that upper bounds the normalized sequence, namely:
\begin{equation}
\mu^*= \max\Bigl\{\ \mu\quad \text{s.t.}\quad \forall k, \bar J_k^\text{cl}\le e^{-\mu k}\ \Bigr\}\label{defdemstar}
\end{equation}
Obviously, the larger (positive) $\mu^*$ the faster is the convergence over the duration of the closed-loop simulation scenario.
\item The comparison of the performances of two algorithms, say $A_1$ and $A_2$ is obtained by observing the statistics (percentiles) of the ratios $\mathcal R_{A_1/A_2}$ between their $\mu^\star$-associated values over the 100 scenarios used in the investigation, namely the percentiles of the quantity:
\begin{equation}
\mathcal R_{A_1/A_2} :=  \dfrac{\mu^\star_{A_1}}{\mu^\star_{A_2}}\label{defdeeta}
\end{equation}
\end{itemize}
While the percentiles of the $\mathcal R_{A_1/A_2}$ indicator enable to compare the convergence rates, the key advantage of using the gradient-based algorithm is related to the minimal achievable computation time. This will be compared by examining the histogram of the computation times needed at each updating periods over the closed-loop simulations of all the 100 scenarios. 
\e 
But before digging into such \textit{blind} statistics over many scenarios, let us give typical instances of closed-loop trajectories during a single closed-loop scenario. This is depicted in Figure \ref{fig:typivalclosedloopsc} which shows a single scenario under compiled multiple-shooting \textbf{fatrop} algorithm (with \texttt{maxIter}=1) on one hand and under the proposed compiled \textbf{SaA} algorithm (with \texttt{maxIter}=5)  on the other hand. 
\begin{figure}[h!]
    \centering
    \footnotesize (Multiple shooting compiled \textbf{fatrop} | \texttt{maxIter}=1)\\
    \includegraphics[width=0.99\linewidth]{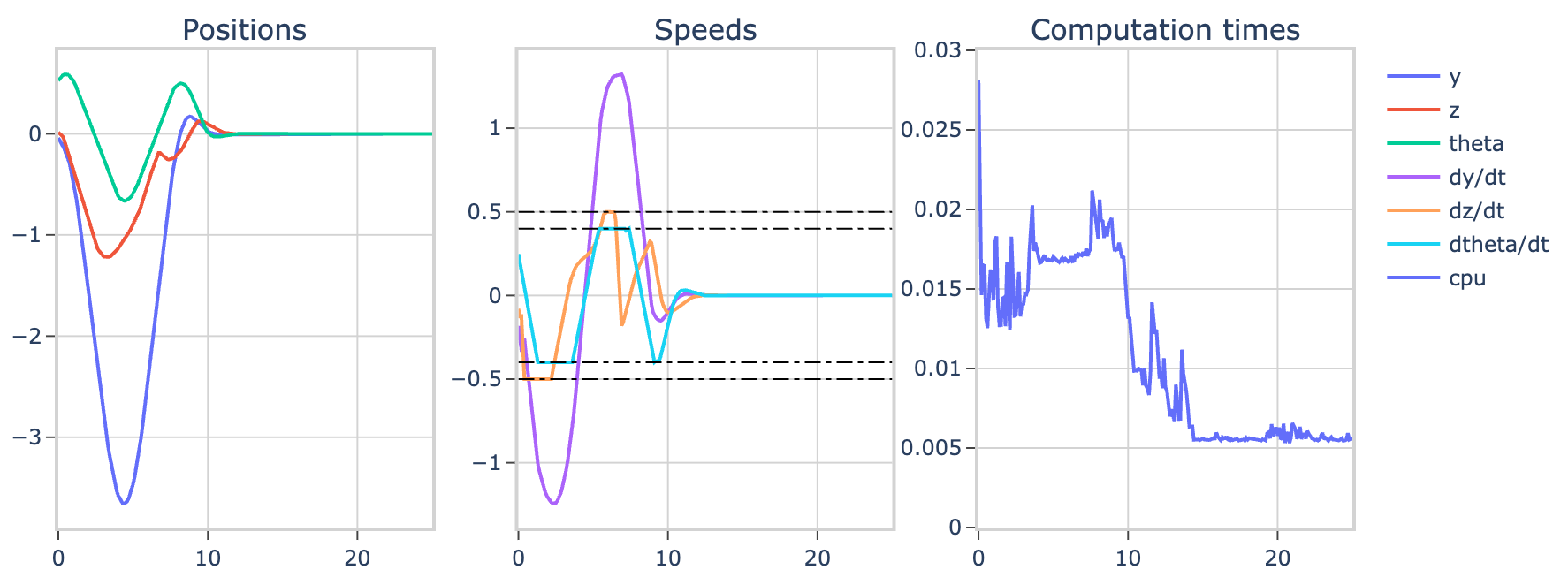}
    \footnotesize (Proposed gradient based \textbf{SaA} | $n_g=8$ | \texttt{maxIter}=5)\\
    \includegraphics[width=0.99\linewidth]{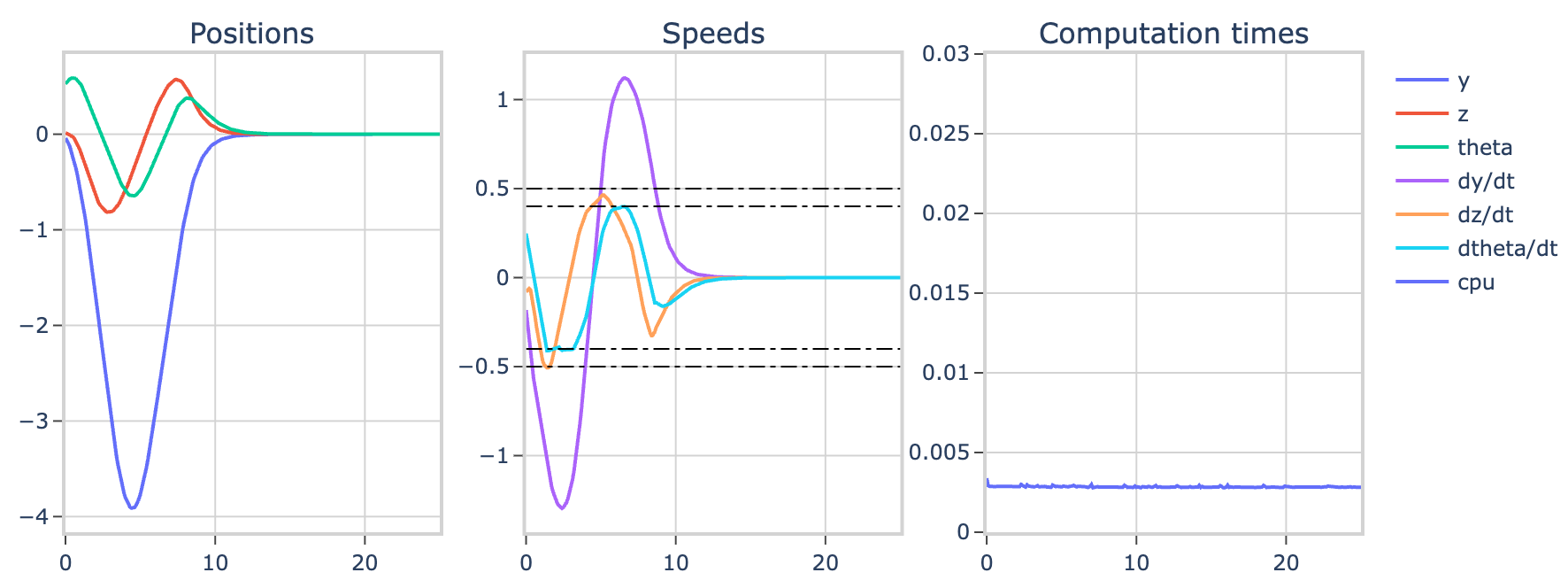}
    \caption{Typical comparison between closed-loop trajectories under \textbf{compiled} multiple shooting \textbf{fatrop} and a compiled version of the proposed gradient based \textbf{SaA} algorithm. Notice the soft constraints satisfaction, the slight deterioration of performance against drastic (8-9 times faster) drop in the worst computation time. Dash-dotted lines without legend represent the constraints on the state.}
    \label{fig:typivalclosedloopsc}
\end{figure}
\e This typical example show a commonly encountered behavior where at the cost of possible slight decrease of performance, the gradient-based algorithm offers a drastic reduction of the elementary computation time. Notice that the use of \texttt{maxIter}=1 in the case of \textbf{fatrop} shows the incompressible amount of computation needed for a single iteration of this algorithm.
\begin{figure}[h!]
    \centering
    \footnotesize (Multiple shooting compiled \textbf{fatrop} | \texttt{maxIter}=1)\\
    \includegraphics[width=0.99\linewidth]{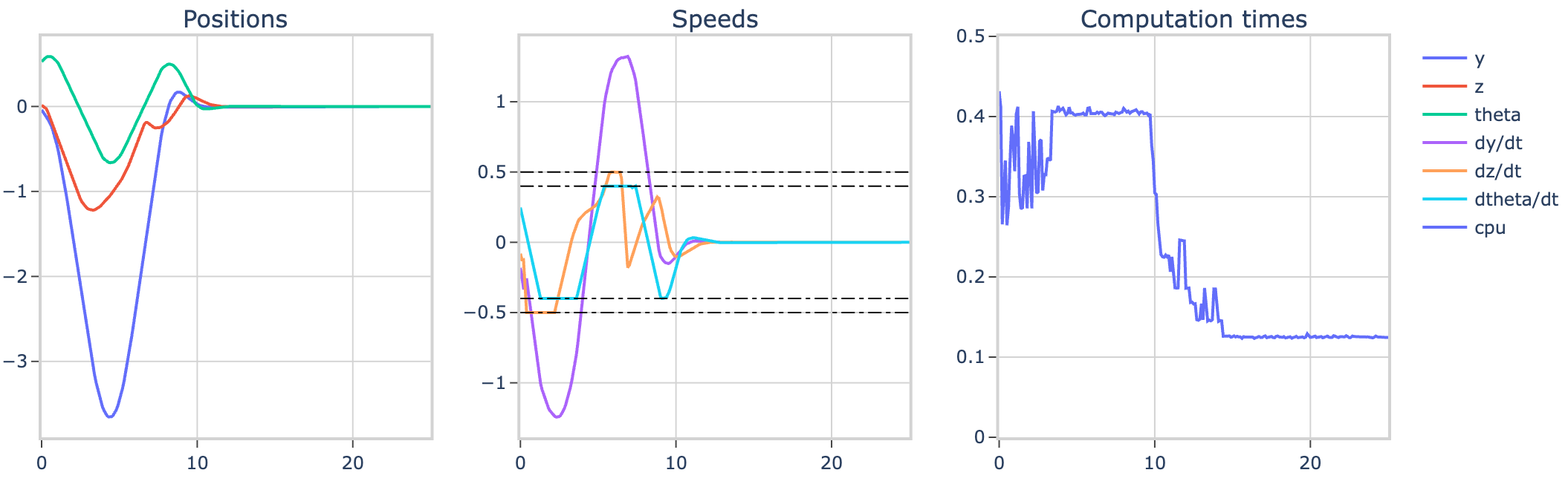}
    \footnotesize (Proposed gradient based \textbf{SaA} | $n_g=8$ | \texttt{maxIter}=10)\\
    \includegraphics[width=0.99\linewidth]{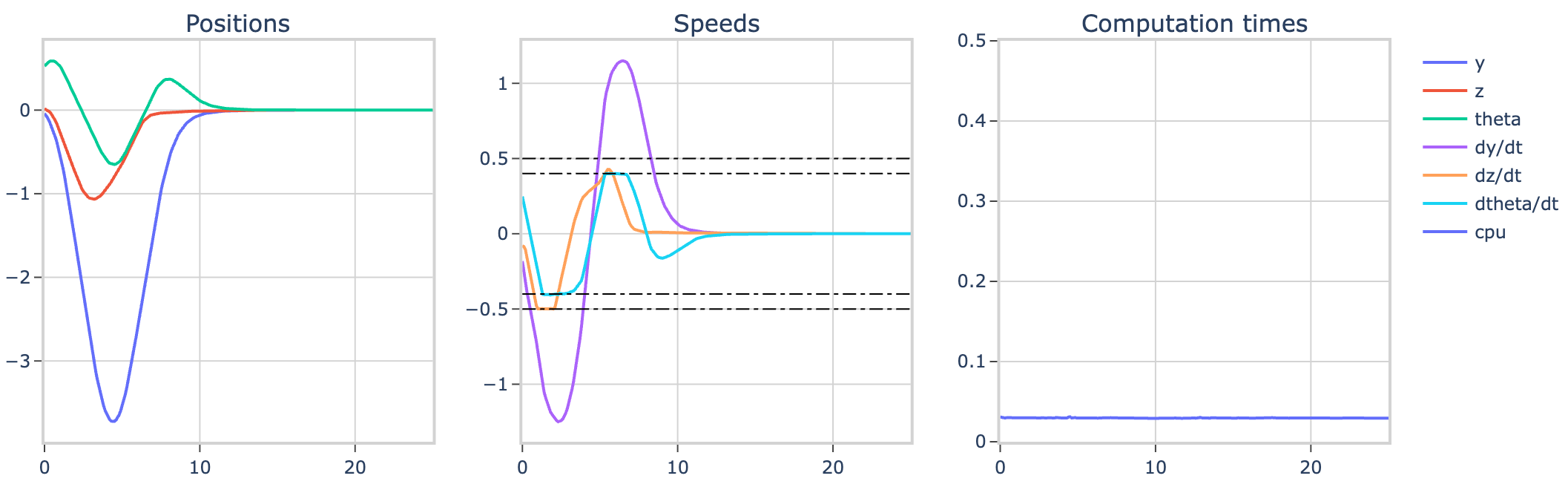}
    \caption{Typical comparison between closed-loop trajectories under \textbf{pure-python} multiple shooting \textbf{fatrop} and a pure python version of the proposed gradient based \textbf{SaA} algorithm. Notice the soft constraints satisfaction, the slight deterioration of performance against drastic  drop in the worst computation time. Dash-dotted lines without legend in the speeds-related subplots represent the constraints on the state.}
    \label{fig:typivalclosedloopscpp}
\end{figure}
The same comparison is reported on Figure \ref{fig:typivalclosedloopscpp} in the case where non compiled pure-python versions of the algorithms are used. Notice that the use of \texttt{maxIter}=10 in \textbf{SaA} (instead of \texttt{maxIter}=5 in Figure \ref{fig:typivalclosedloopsc}) leads to closer trajectories to the ones obtained under \textbf{fatrop} and to tighter fulfillment of the state constraints while keeping a drastic increase in the computation time that can be quite welcome in the prototyping phase avoiding the rather long compilation time. 
\e 
Figure \ref{fig:ratios_1} shows the percentiles of the contraction's exponent ratio $\mathcal R_{\textbf{SaA}/\textbf{fatrop}}$ defined by \eqref{defdeeta}, namely:
\begin{equation}
\mathcal R_{\textbf{SaA}/\textbf{fatrop}} = \frac{\mu^\star_{\textbf{SaA}\vert \texttt{maxIter}=3,5,10}}{\mu^\star_{\textbf{fatrop}\vert \texttt{maxIter=1}}} \label{defdeRplote}
\end{equation}
for different values o the \texttt{maxIter} arguments for \textbf{SaA} while  \texttt{maxIter}=1 is systematically used for \textbf{fatrop} in order to seek minimum computation time.
\begin{figure}[h!]
    \centering
    \includegraphics[width=0.6\linewidth]{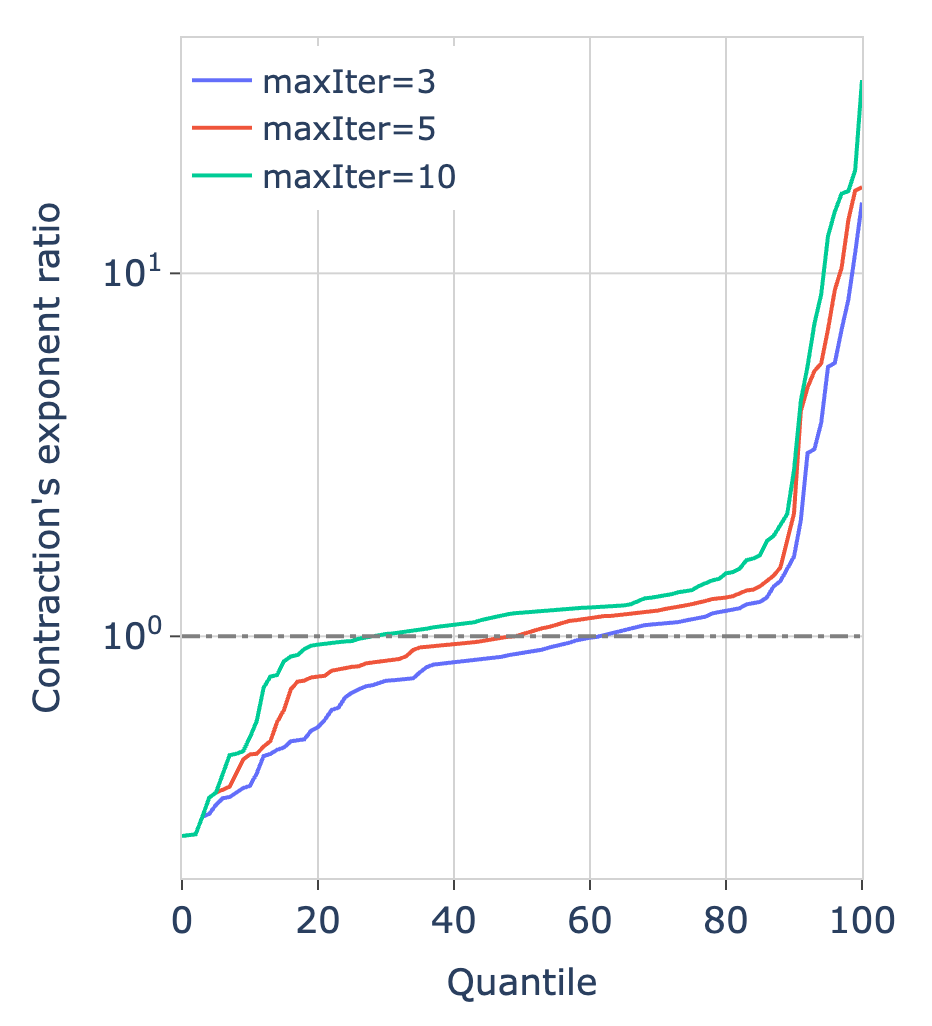}
    \caption{Percentiles of the contraction's exponent ratio $\mathcal R_{\textbf{SaA}/\textbf{fatrop}}$ defined by \eqref{defdeeta} for different values of the \texttt{maxIter} argument for \textbf{SaA} while \texttt{maxIter}=1 for \textbf{fatrop}.}
    \label{fig:ratios_1}
\end{figure}
\e 
Recall that the higher the ratio the better the contraction compared to the reference method (here \textbf{fatrop} with \texttt{maxIter}=1). On the other hand, Figure \ref{fig:ratios_1} shows the expected fact that when more iterations are allowed, the contraction per sampling period is better. 
\e 
It is noticeable to observe that for the case where \texttt{maxIter}=5 is used for \textbf{SaA}, almost exactly 50\% of the scenarios shows \textit{higher contraction} of \textbf{SaA} while the remaining 50\% gives advantages to the \textbf{fatrop} algorithm. These percentage shift to (70\%, 30\%) and (40\%, 60\%) for \texttt{maxIter}=10 and \texttt{maxIter}=3 respectively.
\e 
Now since Figure \ref{fig:ratios_1} shows only a ratio, it is interesting to show the percentile of the values taken by the denominator in \eqref{defdeRplote}. This is shown in Figure \ref{fig:mustar_ref} where the percentiles of the contraction exponent of the \textbf{fatrop} algorithm, namely:
$$\mu^\star_{\textbf{fatrop}\vert \texttt{maxIter=1}}$$ involved in \eqref{defdeRplote} are shown. 
\begin{figure}[h!]
    \centering
    \includegraphics[width=0.6\linewidth]{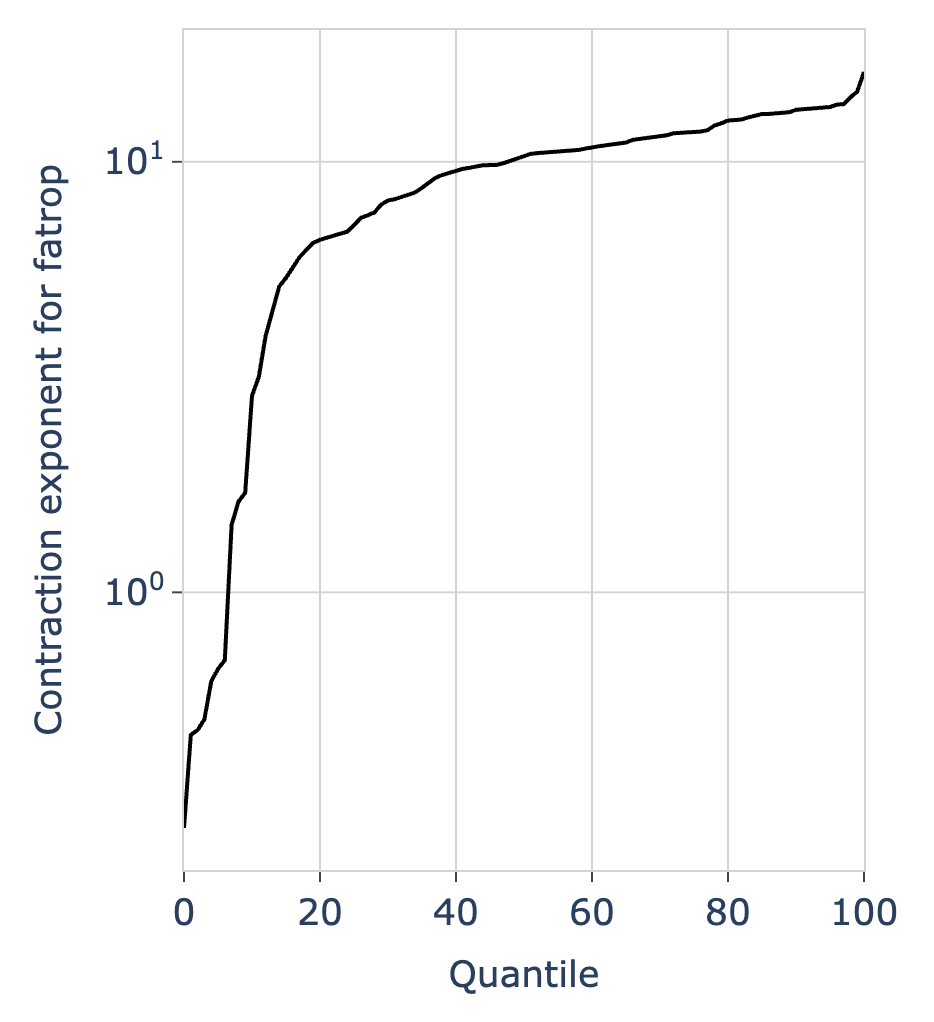}
    \caption{statistics (percentiles) of the contraction rate $\mu^\star_{\textbf{fatrop}\vert \texttt{maxIter=1}}$ with respect to which the ratios shown in Figure \ref{fig:ratios_1} are computed. The minimal value ($\approx 0.28$) indicates that contraction arises in all the 100 scenarios used in the investigation.}
    \label{fig:mustar_ref}
\end{figure}
\e 
This Figure shows in particular that the minimum value of the contraction's  exponent over all the scenario is greater than 0.28. This means that in all the scenarios, contraction is effective and this holds for all the algorithms referenced in Figure \ref{fig:ratios_1}.
\begin{rem}[soft constraints $\rightarrow$ different cost function]
It is important to notice however that because of the use of soft constraints in \textbf{SaA}, the cost function is not necessarily the same  for both algorithms should the initial guess leads to constraints violation. Nevertheless, as the value of each cost function is normalized by its own initial value, the contraction concept is relevant for each algorithm. Moreover, as the inequality used in the definition of the bound in \eqref{defdemstar} applies to the whole rather long scenario at the end of which the constraints are generally inactive, the two cost functions becomes rigorously identical over a large part of the closed-loop simulation. 
\end{rem}
\begin{rem}[The systematic use of \texttt{maxIter}=1 for \textbf{fatrop}] It is important to keep in mind that our aim here is to compare the extent to which the elementary computation time of the reference solver can be reduced. More importantly, the aim is to see to which extent, the \textbf{SaA} is able to achieve successful closed-loop while requiring much smaller elementary computation times. Hence by using \texttt{maxIter}=1 for \textbf{fatrop} enables to show that for this solver, the minimum amount of elementary computation time reaches a kind of \textit{hard} lower bound that cannot be reduced. It is only then that the relevance of the gradient-based algorithm \textbf{SaA} becomes obvious. In other words, no matter how much increase in the performance of \textbf{fatrop} can be reached by increasing its \texttt{maxIter} parameters, the corresponding computation time can only be higher and hence not eligible in some circumstances.
\end{rem}
Now that we can shortly state that in terms of the performance, (\textbf{fatrop} | \texttt{maxIter}=1) and (\textbf{SaA} | \texttt{maxIter}=5) are quite comparable, it is time to examine the statistics of the computation times. 

\begin{figure}[h!]
\begin{center}
\includegraphics[width=0.35\textwidth]{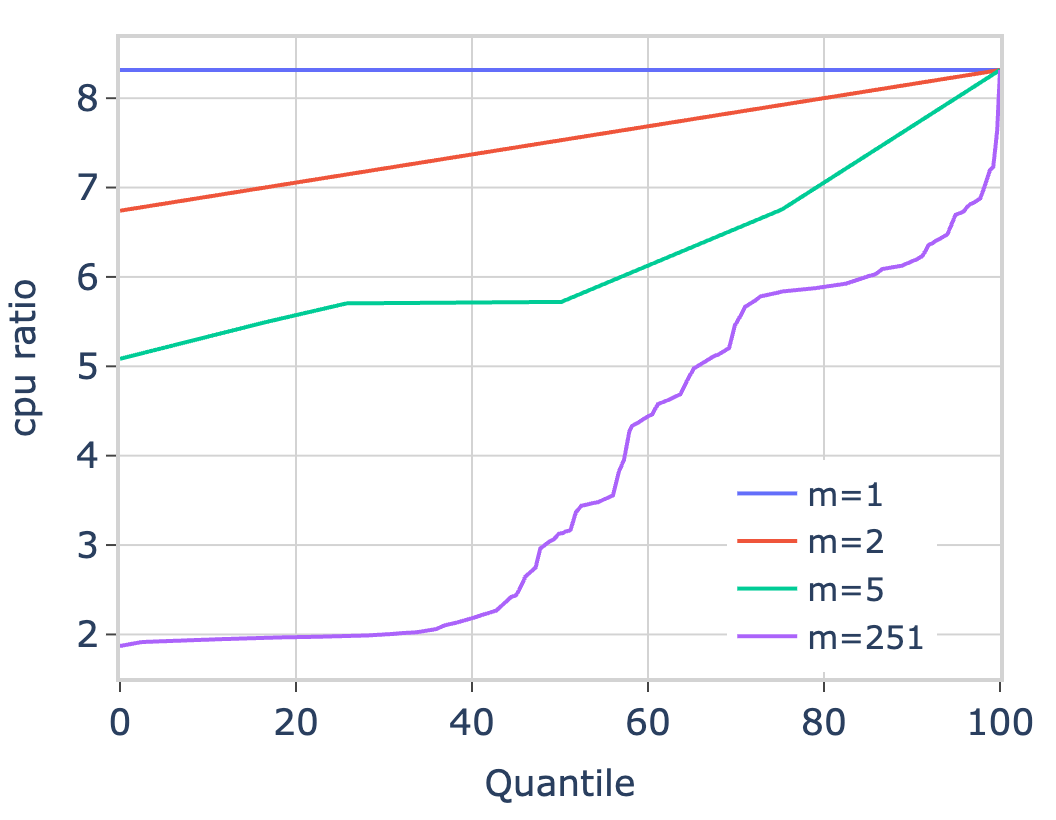} 
\end{center}
\caption{Statistics (percentiles) of the ratios of computation times ($\texttt{cpu}_\textbf{fatrop} / \texttt{cpu}_\textbf{SaA}$) over all the first $m$ sampling periods of all the 100 scenarios for the \textbf{compiled} versions of the algorithms.}\label{fig:ratio_cpu}
\end{figure}
\e Figure \ref{fig:ratio_cpu} shows the percentiles of the ratios of computation times of the (\textbf{fatrop} | \texttt{maxIter}=1) algorithm relative to those of the (\textbf{SaA} | \texttt{maxIter}=5) algorithm for all the first $m$ sampling periods of all the 100 simulated closed-loop scenario and this for different values of $m\in \{1,2,5,251\}$. 
\e Notice that for $m=251$, the percentiles are computed for the total number of sampling periods. While for the smaller values of $m\in \{1,2,5\}$, the statistics incorporate the very first sampling periods of each scenario where the computation times take systemically the highest values. 
\e The plots of Figure \ref{fig:ratio_cpu} suggests that when the computation times are high (during the few sampling instants after a change in the set-point), the reduction in the computation time offered by the \textbf{SaA} gradient-based algorithm, compared to the \textbf{fatrop} algorithm, reach values ranging from a factor of 5 to 9. For stationary regime periods, the reduction factor is around 2, as it can be clearly seen on the typical example shown in Figure \ref{fig:typivalclosedloopsc}.
\e Finally, Figure \ref{fig:ratio_cpu_python} shows the same results as the ones shown in Figure \ref{fig:ratio_cpu} in the case where non compiled pure python versions of the algorithms are used. One can observe a higher rate of reduction in the computation times suggesting that the compilation of the gradient-based algorithm is not optimized as the one used in the \texttt{Casadi} utility. 
\begin{figure}[h!]
\begin{center}
\includegraphics[width=0.35\textwidth]{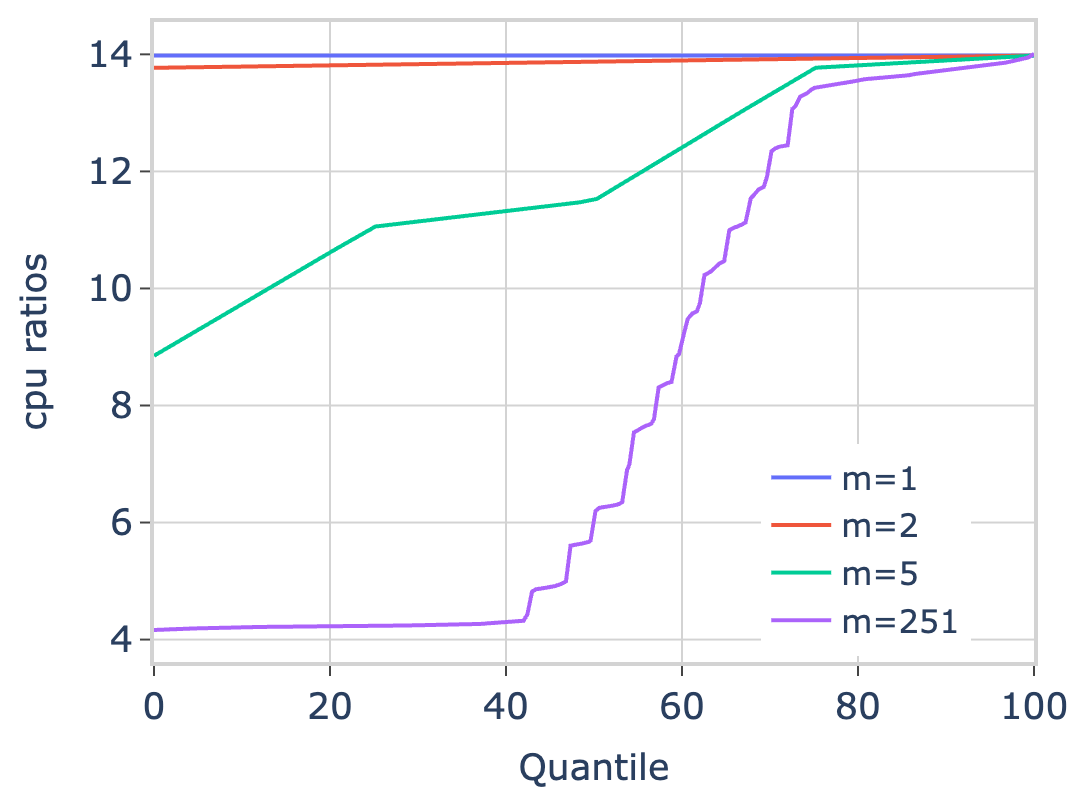} 
\end{center}
\caption{Statistics (percentiles) of the ratios of computation times ($\texttt{cpu}_\textbf{fatrop} / \texttt{cpu}_\textbf{SaA}$) over all the first $m$ sampling periods of all the 100 scenarios for the \textbf{pure python} versions of the algorithms.}\label{fig:ratio_cpu_python}
\end{figure}
\section{Conclusion and future work}\label{sec-conc}
\noindent In this paper, a new gradient-based algorithm is proposed for box-constrained optimization problems for which the cost function and its gradient are available. 
\e Unlike existing gradient-based alternatives, the proposed algorithm, termed \textbf{SaA} for Search \& Accelerate combines a novel line search (with a trust region adaptation mechanism) along the gradient and another novel line search over the Nesterov acceleration path while being supported by a rigorous convergence proof.
\e A benchmark including 600 box constrained optimization problems is also proposed and made publicly available showing that the proposed algorithm outperforms existing alternatives in both computation time and solution quality. 
\e Finally, the relevance of the proposed gradient-based algorithm \textbf{SaA} for NMPC implementation when very small control updating periods are required has been shown through a dedicated case study.
\e Undergoing work is twofold: first, a sparsity-aware version of the algorithm is under investigation; then the application of the algorithm in efficiently training (dense/sparse) neural networks will be undertaken.

\bibliographystyle{elsarticle-harv}  
\bibliography{bib_derivation}            

\end{document}